\documentclass[a4paper,UKenglish,cleveref, autoref, thm-restate, nolineno]{socg-lipics-v2021}

\pdfoutput=1 
\hideLIPIcs  

\usepackage{amsmath,amssymb,amsthm}
\usepackage{mathtools}
\allowdisplaybreaks
\usepackage{array}
\usepackage[table]{xcolor}
\usepackage[linesnumbered,ruled,vlined]{algorithm2e}
\usepackage{tikz}
\usepackage{float}
\usepackage[most]{tcolorbox}



\let\epsilon\varepsilon

\newcommand{\bigO}[1]{\mathcal{O}\!\left(#1\right)}
\newcommand{\tO}[1]{\widetilde{\mathcal{O}}\!\left(#1\right)}

\newcommand{\mset}[1]{\{\!\{#1\}\!\}}
\DeclareMathOperator{\opt}{OPT}

\DeclareMathOperator{\SSE}{SSE}

\DeclareMathOperator{\OPT}{OPT}

\SetKwInOut{Input}{Input}
\SetKwInOut{Output}{Output}

\title{Nearly Optimal Bounds for Computing Decision Tree Splits in Data Streams}

\author{Hoang {Ta}}{Hanoi University of Science and Technology, Hanoi, Vietnam}{hoang.taduy@hust.edu.vn}{https://orcid.org/0009-0008-0808-6466}{Corresponding author.}

\author{Hoa T. Vu}{San Diego State University, San Diego, USA}{hvu2@sdsu.edu}{https://orcid.org/0000-0001-8873-0208}{Supported by NSF Grant No. 2342527. Corresponding author.}

\authorrunning{H. Ta and H.\,T. Vu}

\Copyright{Hoang Ta and Hoa T. Vu}

\ccsdesc[500]{Theory of computation~Streaming, sublinear and near linear time algorithms}
\ccsdesc[500]{Theory of computation~Lower bounds and information complexity}
\ccsdesc[500]{Theory of computation~Sketching and sampling}

\keywords{Decision trees, Streaming algorithms, Lower bounds} 

\EventEditors{Philip Bille, Seth Pettie, and Sabine Storandt}
\EventNoEds{3}
\EventLongTitle{34th Annual European Symposium on Algorithms (ESA 2026)}
\EventShortTitle{ESA 2026}
\EventAcronym{ESA}
\EventYear{2026}
\EventDate{August 31--September 4, 2026}
\EventLocation{L'Aquila, Italy}
\EventLogo{}
\SeriesVolume{388}
\ArticleNo{40}

\begin{document}

\maketitle

\begin{abstract}
We establish nearly optimal upper and lower bounds for approximating decision tree splits in data streams. For regression with labels in the range $\{0,1,\ldots,M\}$, we give a one-pass algorithm using $\tO{M^2/\epsilon}$ space \footnote{$\tO{\cdot}$ hides polylogarithmic factors.} that outputs a split within additive $\epsilon$ error of the optimal split, improving upon the two-pass algorithm of Pham et al.\ (ISIT~2025). Furthermore, we provide a matching one-pass lower bound showing that $\Omega(M^2/\epsilon)$ space is indeed necessary.

For classification, we also obtain a one-pass algorithm using $\tO{1/\epsilon}$ space for approximating the optimal Gini split, improving upon the previous $\tO{1/\epsilon^2}$-space algorithm. We complement these results with matching space lower bounds: $\Omega(1/\epsilon)$ for Gini impurity and $\Omega(1/\epsilon)$ for misclassification (which matches the upper bound obtained by sampling).

Our algorithms exploit the Lipschitz property of the loss functions and use reservoir sampling along with Count--Min sketches with range queries. Our lower bounds follow from careful reductions from the \textsc{Index} problem.

\end{abstract}

\section{Introduction}
\label{sec:introduction}

Decision trees are ubiquitous in machine learning. They are used as base learners for powerful ensemble methods such as random forests~\cite{Ho95}, gradient boosting~\cite{MasonBBF99,Friedman2002}, AdaBoost~\cite{FreundS97}, XGBoost~\cite{ChenG16}, LightGBM~\cite{KeMFWCMYL17}, and CatBoost~\cite{ProkhorenkovaGV18}. These ensemble models achieve state-of-the-art performance on many tasks, particularly on tabular data~\cite{GrinsztajnOV22,Shwartz-ZivA22}. At the algorithmic level, a basic primitive in these methods is \emph{split selection}: given a feature and a loss criterion, choose the threshold that yields the best partition of the data~\cite{BreimanFOS84,Quinlan93}.

In large-scale settings, the training data may arrive continuously as a stream and cannot be stored in memory in its entirety. This makes it natural to ask whether one can approximate the best split using only one or a few passes and sublinear space. Beyond its practical relevance, this question is also of independent theoretical interest since split selection is the core operation repeated throughout tree learning and other data analysis tools.

In this paper, we study this problem in the insertion-only streaming model and establish nearly optimal bounds for several standard split objectives including mean squared error for regression, misclassification rate, and Gini impurity for classification.

\subparagraph{Related Work.} Decision trees in data streams have been extensively studied. Domingos and Hulten~\cite{DomingosH00} introduced VFDT, the seminal decision tree learning algorithm for classification for an infinite i.i.d.\ stream; regression is not addressed in their work. A large subsequent literature has studied adaptive random forests~\cite{GomesBRBEPHA17}, concept drift~\cite{WangFYH03, CanoK22}, and other extensions of tree learning in data streams~\cite{hulten2001mining, jin2003efficient, bifet2009new, rutkowski2014cart, manapragada2018extremely}. Others have provided open-source implementations~\cite{BifetZFHZQHP17, Montiel2021} of decision tree learning from data streams, notably the Python packages \textsc{scikit-multiflow} and \textsc{River}.

Recently, Tatti~\cite{Tatti25} studied fast algorithms for finding decision tree splits in sparse data streams, Silva et al.~\cite{SilvaVG25} considered split computation in federated streaming settings, and Assis et al.~\cite{AssisBE25} investigated split selection from the perspective of structural robustness in evolving streams. 

In contrast to the work of Domingos and Hulten~\cite{DomingosH00}, which focused only on classification, we do not assume that the elements in the stream are independently and identically distributed. Another challenge we must address is that, unlike misclassification, regression and Gini-type objectives depend on aggregate statistics over ranges of feature values, and these quantities must be estimated accurately in one pass using sublinear space. Our approach combines two ingredients: reservoir sampling, which yields a small candidate set containing a near-optimal split, and coupled dyadic Count--Min sketches, which maintain the range statistics needed to evaluate the loss. For regression, this gives a truly one-pass improvement over the previous two-pass result of \cite{PhamTV25}; for classification with Gini impurity, a simple label re-encoding reduces the problem to the regression setting.

Since exact decision tree learning is NP-hard \cite{HyafilR76}, greedy heuristics are commonly employed. Top-down frameworks like CART~\cite{BreimanFOS84} and ID3~\cite{Quinlan93} recursively optimize feature thresholds to minimize splitting criteria (e.g., Gini impurity and MSE), which serve as the core computational primitives of tree construction.

\subparagraph{Problem formulation.}
We model the feature domain as $\{1,2,\ldots,N\}$. For real-valued features, this reflects discretization into bins~\cite{BreimanFOS84,ChengFIQ88,Quinlan93,HongLK06,DoughertyKS95,FayyadI92}.

The following formulation of the optimal split problem follows the standard approach of \cite{BreimanFOS84} and \cite{Quinlan93}. For a more recent overview, we refer to Chapter 9 of \cite{HastieTF09}. 

If $k$ is a positive integer, we use $[k]$ to denote the set $\{1, 2, \ldots, k\}$. We denote the indicator variable for the event $Z$ by $\mathbf{1}[Z]$. 

\subparagraph{Regression.} For regression, we are given a stream of pairs
\[
(x_1,y_1),(x_2,y_2),\dots,(x_m,y_m),
\qquad
x_i \in [N],\ y_i \in \{0,1,2,\ldots,M\}.
\]
We assume $y_i\in \{0,1,2,\ldots,M\}$ for some $M\ge 0$, since labels can always be shifted and discretized (as computers have finite precision). We further assume $M=\mathrm{poly}(m)$, so that it can be represented using $O(\log m)$ bits, which is typical in the RAM model.
 For a split $j \in \{0,1,\dots,N\}$, let
\[
L_j := [1,j], \qquad R_j := [j+1,N], \qquad
\mu(j) = \frac{\sum_{i=1}^m \mathbf{1}[x_i \le j]\, y_i}{\sum_{i=1}^m \mathbf{1}[x_i \le j]},
\qquad
\gamma(j) = \frac{\sum_{i=1}^m \mathbf{1}[x_i > j]\, y_i}{\sum_{i=1}^m \mathbf{1}[x_i > j]}.
\]
Note that $\mu(j)$ and $\gamma(j)$ are the label means on the left and right sides. They serve as the optimal prediction for $x_i \le j$ and $x_i \ge j+1$ if we split the data at $j$. The mean squared loss for regression is defined as follows
\begin{align} \label{eq:mse-loss}
L_{\mathrm{MSE}}(j)=\frac{1}{m}\left(
\sum_{i=1}^m \mathbf{1}[x_i \le j] (y_i-\mu(j))^2
+
\sum_{i=1}^m \mathbf{1}[x_i > j] (y_i-\gamma(j))^2
\right).
\end{align}

Our goal is to output $\widehat{j}$ such that $L_{\mathrm{MSE}}(\widehat{j})\le \opt+\epsilon$, where
\[
\opt:=\min_{0\le j\le N}L_{\mathrm{MSE}}(j).
\]

\begin{figure}[t]
\centering
\begin{tikzpicture}[scale=0.6]
  \draw[->,thick] (-0.2,0) -- (9.8,0) node[right] {$x$};
  \draw[->,thick] (0,-0.2) -- (0,4.8) node[above] {$y$};
  \node[below] at (0.3,0) {\small $1$};
  \node[below] at (9.2,0) {\small $N$};

  \foreach \x/\y in {0.4/1.0, 0.9/0.7, 1.4/1.3, 2.0/0.5, 2.5/1.1, 2.9/0.8, 3.4/1.4, 3.9/0.9, 4.3/1.2}{
    \draw[blue!50,thin] (\x,\y) -- (\x,1.0);
    \fill[blue!70!black] (\x,\y) circle (2.2pt);
  }

  \foreach \x/\y in {5.1/2.9, 5.6/3.4, 6.1/2.6, 6.5/3.2, 7.0/3.6, 7.5/2.8, 8.0/3.3, 8.5/3.0, 9.0/3.5}{
    \draw[red!50,thin] (\x,\y) -- (\x,3.1);
    \fill[red!70!black] (\x,\y) circle (2.2pt);
  }

  \draw[blue!70!black,thick] (0.2,1.0) -- (4.5,1.0)
    node[right,font=\small] {$\mu(j^\star)$};
  \draw[red!70!black,thick] (5.0,3.1) -- (9.3,3.1)
    node[right,font=\small] {$\gamma(j^\star)$};

  \draw[dashed,thick,gray!80] (4.7,-0.4) -- (4.7,4.5)
    node[above,black,font=\small] {$j^\star$};

  \draw[<->,blue!70!black] (0.2,-0.7) -- (4.5,-0.7)
    node[midway,below,font=\small] {$L_{j^\star}$};
  \draw[<->,red!70!black]  (5.0,-0.7) -- (9.2,-0.7)
    node[midway,below,font=\small] {$R_{j^\star}$};
\end{tikzpicture}
\caption{Illustration of the optimal split $j^\star$.}
\label{fig:optimal-split}
\end{figure}

\subparagraph{Classification.} For classification, the labels are in $\{-1,+1\}$. The misclassification loss at split $j$ is
\[
L_{\rm mis}(j)
=
\frac{1}{m}\Big(
\min\{f_{-1,[1,j]},f_{+1,[1,j]}\}
+
\min\{f_{-1,[j+1,N]},f_{+1,[j+1,N]}\}
\Big) \, .
\]
Here, for any interval $R \subseteq [N]$, $f_{+1,R}$ and $f_{-1,R}$ denote the numbers of data points in $R$ with labels $+1$ and $-1$, respectively. Intuitively, we classify each data point based on the majority on the side of the split. Another popular loss function is the Gini impurity (to be defined formally in Section \ref{sec:one-pass-gini}):
\[
L_{\mathrm{Gini}}(j)
=
\frac{|L_j|}{m}\operatorname{Gini}(\mset{ y_i : x_i \le j })
+
\frac{|R_j|}{m}\operatorname{Gini}(\mset{ y_i : x_i \ge j + 1 }).
\]

\subparagraph{Motivation.} Our objective is  to design optimal algorithms that  use memory sublinear in $m$ and $N$ and to prove matching lower bounds. 

In many large-scale applications, the data does not fit in main memory and must be processed as it arrives while using sublinear space. As more data arrive, $m$ grows and one will run out of main memory (RAM) if we choose to store all the data points. 

High-cardinality or composite features (i.e., a feature that is a combination
of several features) may also yield a large feature range $N$, significantly
impacting memory and time complexity~\cite{FayyadI92}.

Note that even if $O(N)$ space is acceptable, there might be many features and
we want to find the best split among all of them. Furthermore, in ensemble methods, we also often train many trees in parallel. Hence, using $o(N)$ space per feature and tree significantly improves the overall memory footprint.

Optimal split selection also arises outside tree learning, including in image segmentation and change-point detection \cite{otsu1975, aminikhanghahi2017}.

Providing matching lower and upper bounds is also of independent theoretical interest, as it characterizes the intrinsic difficulty of the problem and helps explain the memory requirements of practical heuristics.

\subparagraph{Our results.} In this paper, we focus on the one-pass regime. Our main algorithmic contribution is a one-pass additive approximation algorithm for regression using $\tO{M^2/\epsilon}$ space, which reduces to $\tO{1/\epsilon}$ when $M=\bigO{1}$. This improves upon the two-pass algorithm of \cite{PhamTV25}.

We further show that the same framework yields a one-pass additive approximation algorithm for Gini impurity using $\tO{1/\epsilon}$ space, improving upon the previous $\tO{1/\epsilon^2}$ upper bound in \cite{PhamTV25}. 

On the lower-bound side, we prove one-pass space lower bounds for regression, misclassification, and Gini impurity. In particular, our upper and lower bounds are essentially tight, up to polylogarithmic factors, for regression, misclassification, and Gini impurity. Table~\ref{tab:upper_comparison} summarizes the relevant bounds for numerical split objectives.

Our algorithms succeed with high probability, for example, with probability at least $1 - 1/\mathrm{poly}(N)$ or $1 - 1/\mathrm{poly}(m)$. Our lower bounds apply to any algorithm that succeeds with probability at least $2/3$. The deterministic complexity of the problem remains an interesting open question.

All omitted proofs can be found in  Appendix \ref{sec:omitted}.

\begin{table}[t]
\centering
\small
\renewcommand{\arraystretch}{1.25}
\begin{tabular}{|l|c|l|l|}
\hline
\textbf{Objective} & \textbf{Passes} & \textbf{Space} & \textbf{Reference} \\
\hline
\multicolumn{4}{|l|}{\textit{Upper bounds}} \\
\hline
Regression  & $2$ & $\tO{M^2/\epsilon}$ & \cite{PhamTV25} \\
\hline
\rowcolor{gray!10}
Regression  & $1$ & $\tO{M^2/\epsilon}$ & This Paper \\
\hline
Misclassification & $1$ & $\tO{1/\epsilon}$ & \cite{PhamTV25} \\
\hline
Gini impurity & $1$ & $\tO{1/\epsilon^2}$ & \cite{PhamTV25} \\
\hline
\rowcolor{gray!10} Gini impurity & $1$ & $\tO{1/\epsilon}$ & This Paper \\
\hline
\multicolumn{4}{|l|}{\textit{Lower bounds}} \\
\hline
\rowcolor{gray!10}
Regression & $1$ & $\Omega(M^2/\epsilon)$ & This Paper \\
\hline
\rowcolor{gray!10}
Misclassification & $1$ & $\Omega(1/\epsilon)$ & This Paper \\
\hline
\rowcolor{gray!10}
Gini impurity & $1$ & $\Omega(1/\epsilon)$ & This Paper \\
\hline
\end{tabular}
\caption{Comparison of upper and lower space bounds for numerical split objectives in the streaming model. All guarantees are additive-$\epsilon$ approximations.}
\label{tab:upper_comparison}
\end{table}


\section{Algorithms}

\subsection{One-pass additive approximation for regression}
\label{sec:one-pass-regression}

In this section we present a truly one-pass additive approximation algorithm for the regression split problem. In contrast with the previous algorithm in \cite{PhamTV25}, the algorithm does not require the stream length $m$ to be known in advance. 

At a high level, we obtain a set of candidate splits via reservoir sampling. We show that the loss function satisfies a Lipschitz property which implies that one of the candidates is a good approximation. To this end, the loss of each candidate split is estimated using a coupled dyadic Count--Min sketch that tracks the three range moments required for regression.

The main goal of this section is to prove the following.

\begin{theorem}
\label{thm:one-pass-regression}
Fix $\epsilon\in(0,1)$. There exists a randomized one-pass streaming algorithm for the regression split problem that, on every insertion-only stream with labels in $\{0,1,\ldots,M\}$, uses $\widetilde{\mathcal{O}}\!\left(\frac{M^2}{\epsilon}\right)$ space and post-processing time, has $\tO{1}$ update time, and with high probability outputs a split $\widehat{j}\in\{0,1,\ldots,N\}$ satisfying $L_{\mathrm{MSE}}(\widehat{j})\le \OPT+\epsilon$.
\end{theorem}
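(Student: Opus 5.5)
We combine three ingredients: a rewriting of the loss in terms of range moments, a Lipschitz-type property that lets a small random candidate set suffice, and a coupled dyadic Count--Min sketch that estimates those moments for every candidate.

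\textbf{Loss as range moments.} For an interval $I\subseteq[N]$ define
\[
c_I=\#\{i:x_i\in I\},\qquad s_I=\sum_{i:x_i\in I}y_i,\qquad q_I=\sum_{i:x_i\in I}y_i^2 .
\]
Then
\[
m\,L_{\mathrm{MSE}}(j)=q_{[1,N]}-\frac{s_{L_j}^2}{c_{L_j}}-\frac{s_{R_j}^2}{c_{R_j}} .
\]
The term $q_{[1,N]}$ and the global totals $c_{[1,N]}=m$, $s_{[1,N]}$ are tracked exactly. So everything reduces to estimating the prefix quantities $c_{L_j}$ and $s_{L_j}$; the suffix quantities follow by subtraction.

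\textbf{Lipschitz property and candidate set.} The key claim is that moving the threshold across a set $T$ of $t$ stream points changes $m\,L_{\mathrm{MSE}}$ by $O(tM^2)$. Each moved point adds at most $M^2$ to the squared deviation on its new side. Re-centering the means can only decrease the within-side sums, since the mean is the optimal constant predictor. A symmetric argument for the reverse move gives two-sided control.

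We then sort the stream points by $x$ (ties broken arbitrarily) and aim for a candidate set $S$ of thresholds such that every gap between consecutive candidates contains at most $\epsilon m/(CM^2)$ points. To get this, keep $k=\tilde O(M^2/\epsilon)$ uniform samples via reservoir sampling, which needs no knowledge of $m$, and take their $x$-values together with $0$ and $N$ as candidates. A standard $\epsilon$-net argument over quantile intervals gives the gap guarantee with high probability.

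One subtlety is ties: a single value $x$ may carry many points, so the gap guarantee has to be phrased over thresholds $j$ rather than points. Splits can only occur between distinct values, so $j^\star$ lies between two sampled values $a\le j^\star<b$. The points with $x\in(a,j^\star]$ form a block of the sorted order, but that block contains no sampled point only if it is not heavy. So we apply the Lipschitz bound to the split at $a$ versus $j^\star$, where only points in $(a,j^\star]$ move. If the interval $(a,j^\star]$ had more than $\epsilon m/(CM^2)$ points, some sample would land in it with high probability, contradicting the choice of $a$. Hence $\min_{j\in S}L(j)\le\OPT+\epsilon/3$.

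\textbf{Estimating candidates.} We maintain dyadic Count--Min sketches for the weighted frequency vectors $c$ and $s$, sharing hash functions (the coupling). Prefix queries decompose into $O(\log N)$ dyadic ranges, each answered with additive error $\delta$ times the total mass, with $\tilde O(1/\delta)$ width.

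The main obstacle is error propagation through the ratio $s^2/c$ when $c_{L_j}$ is small. The ratio satisfies $s/c\le M$, and its derivatives give
\[
\Bigl|\frac{\hat s^2}{\hat c}-\frac{s^2}{c}\Bigr|\lesssim M\,|\hat s-s|+M^2\,|\hat c-c|
\]
when $\hat c\ge c/2$. The additive errors are $|\hat c-c|\le\delta m$ and $|\hat s-s|\le\delta Mm$, so the total error is $O(\delta M^2 m)$. Choosing $\delta=\epsilon/(CM^2)$ gives width $\tilde O(M^2/\epsilon)$.

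For small sides, where $c_{L_j}<\epsilon m/M^2$, the whole side contributes at most $M^2 c_{L_j}\le\epsilon m$ either way. So we clamp the estimate of $s^2/c$ into $[0,M^2 c]$, which keeps the error $O(\epsilon m)$. Note that the one-sided Count--Min overestimate of $c$ only helps here.

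\textbf{Conclusion.} A union bound over the $\tilde O(M^2/\epsilon)$ candidates with failure probability $1/\mathrm{poly}$ per query gives every candidate estimate within $\epsilon/3$. Outputting the estimated minimizer over $S$ then yields $L_{\mathrm{MSE}}(\widehat j)\le\OPT+\epsilon$. The space is $\tilde O(M^2/\epsilon)$ and the update time is $\tilde O(1)$: $O(\log N)$ dyadic levels times $O(\log)$ rows, plus reservoir maintenance. Post-processing evaluates $|S|$ candidates with $\tilde O(1)$ queries each.
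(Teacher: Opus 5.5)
Your proposal is correct. It has the same skeleton as the paper: reservoir-sampled candidate thresholds, the shift bound $|L_{\mathrm{MSE}}(j')-L_{\mathrm{MSE}}(j)|\le bM^2/m$, and Count--Min range sketches for the moments. It makes three different technical choices, each of which works.

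(i) You use $mL_{\mathrm{MSE}}(j)=q_{[1,N]}-s_{L_j}^2/c_{L_j}-s_{R_j}^2/c_{R_j}$ and keep $q_{[1,N]}$, $s_{[1,N]}$ and $m$ exactly. So you need only two sketches and only prefix queries. The paper sketches all three moments and estimates each side separately, which adds an error term from $\widehat q_R$.

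(ii) To control $s^2/c$ when the count is small, you clamp using the a priori constraint $s\le Mc$. The paper instead pads the denominator, using $\widehat q_R-\widehat s_R^2/(\widehat n_R+\beta m)$. This bounds the dangerous term $b_R^2/D$ by $\beta M^2m$ uniformly and needs no case analysis.

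(iii) You round $j^\star$ down to the largest candidate $a\le j^\star$. Then $(a,j^\star]$ is sample-free by maximality, and the sampled values alone suffice. The paper rounds up to $\min\{t\in S:t>j^\star\}$, so it must also put $x-1$ into $S$ to handle a sample lying exactly at that candidate.

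Your estimation paragraph is loose in several places. None of this is fatal, but it should be fixed.

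\begin{itemize}
\item The displayed inequality is false under the stated condition $\hat c\ge c/2$. Take $c=\hat c=1$, $s=0$, $\hat s=\delta Mm$: the left side is $\delta^2M^2m^2$, far above $\delta M^2 m$. What the derivative bound really needs is $\hat s/\hat c=O(M)$ along the segment. That does hold in your large regime, provided the threshold is a large multiple of $\delta m$.
\item Your small/large split is on the unknown true $c$. So the clamp has to be applied to every side, with $\hat c$ in place of $c$.
\item Suffix estimates obtained by subtraction are underestimates. Your remark that the one-sided Count--Min error ``only helps'' therefore does not cover them.
\item A small-side threshold of $\epsilon m/M^2$ allows error up to about $\epsilon m$ on that side. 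This exceeds your final $\epsilon/3$ budget.
\end{itemize}

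All of this disappears if you always replace $\hat s$ by $\tilde s:=\min\{\max\{\hat s,0\},M\hat c\}$, outputting $0$ when $\hat c\le 0$; the degenerate cases $c=0$ or $\hat c\le 0$ are immediate.

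\begin{itemize}
\item One checks that $|\tilde s-s|\le\delta Mm$ for both the overestimated prefix and the underestimated suffix.
\item Both $(s,c)$ and $(\tilde s,\hat c)$ lie in the convex cone $\{0\le s\le Mc,\ c>0\}$. On this cone, $|\partial_s(s^2/c)|\le 2M$ and $|\partial_c(s^2/c)|\le M^2$.
\item The mean value theorem then gives error at most $3\delta M^2m$ per side, matching the paper's $3\beta M^2m$, with no case split.
\end{itemize}
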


We first need to define several quantities to be used throughout this section. For every range $R \subseteq [1,N]$, let the count, 1st moment, and 2nd moment be defined as
\[
  n_R :=\sum_{i=1}^m \mathbf{1}[x_i\in R], \qquad
  s_R :=\sum_{i=1}^m \mathbf{1}[x_i\in R]\,y_i, \qquad
  q_R :=\sum_{i=1}^m \mathbf{1}[x_i\in R]\,y_i^2.
\]

For every nonempty range $R$, let $\bar y_R := \frac{s_R}{n_R}$. Observe that
\[
\sum_{i:x_i\in R}(y_i-\bar y_R)^2
= q_R - 2\bar y_R s_R + n_R\bar y_R^2
= q_R - \frac{2s_R^2}{n_R} + \frac{s_R^2}{n_R}
= q_R - \frac{s_R^2}{n_R}.
\]
We define the sum squared error of a set as follows.
\[
\SSE(R):=
\begin{cases}
q_R-\dfrac{s_R^2}{n_R}, & n_R>0,\\
0, & n_R=0.
\end{cases}
\]

For every split $j$, define $L_j := \{1,2,\ldots,j\}$ and $R_j := \{j+1,j+2,\ldots,N\}$. We have
\begin{equation}
m L_{\mathrm{MSE}}(j)=\SSE(L_j)+\SSE(R_j).
\label{eq:sse-identity}
\end{equation}

Throughout this section, we will use the following quantities:
\[
\tau:=\frac{\epsilon}{16M^2},
\qquad
\beta:=\frac{\epsilon}{32M^2},
\qquad
K:=\Theta\!\left(\frac{\log N}{\tau}\right).
\]


We rely on the classic Count--Min sketch with range queries due to Cormode and Muthukrishnan~\cite{CormodeM05}.

\begin{theorem}[Count--Min sketch with range queries]
\label{thm:cms-range}
Consider an insertion-only stream of tuples
$  (x_1,y_1), (x_2,y_2), \ldots, (x_m,y_m)$, where $x_i \in [N]$ and $y_i \in \{0,1,\ldots,M\} $. For any interval $R = [l,r] \subseteq [N]$,
define
\[
  f_R \coloneqq \sum_{i :\, x_i \in R} y_i,
  \qquad\text{and}\qquad
  W \coloneqq \sum_{i=1}^m y_i.
\]
Then there exists a data structure using $\tO{k}$ bits of space and $\tO{1}$ update time such that, with high probability, for all intervals $R = [l,r] \subseteq [N]$, it returns an estimate $\widehat{f}_R$ satisfying
\[
  f_R \le \widehat{f}_R \le f_R + \frac{W}{k}.
\]
\end{theorem}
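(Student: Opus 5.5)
The plan is the standard dyadic construction of Cormode and Muthukrishnan~\cite{CormodeM05}. I make it explicit here because the ``for all intervals simultaneously'' guarantee is what the regression algorithm needs.

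\textbf{Dyadic structure.} Assume without loss of generality that $N$ is a power of two. For each level $\ell\in\{0,1,\ldots,\log N\}$, consider the $N/2^\ell$ dyadic intervals $[(t-1)2^\ell+1,\,t2^\ell]$. There are fewer than $2N$ dyadic intervals in total. Every interval $R=[l,r]\subseteq[N]$ is the disjoint union of at most $2\log N$ dyadic intervals, at most two per level, obtained by the usual greedy or segment-tree decomposition. Hence $f_R$ is the sum of at most $2\log N$ dyadic-interval weights. It therefore suffices to answer \emph{point} queries on each level, where the ``items'' at level $\ell$ are the indices $\lceil x_i/2^\ell\rceil$ with weights $y_i$.

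\textbf{Sketches per level.} For each level, maintain a Count--Min sketch with $d=\Theta(\log N)$ rows and width $w=\lceil 2k'\rceil$, where $k':=2k\log N$. Each row uses a pairwise-independent hash function into $[w]$, which is storable in $O(\log N)$ bits. On arrival of $(x_i,y_i)$, for every level $\ell$ and every row, add $y_i$ to the cell hashed from $\lceil x_i/2^\ell\rceil$. This takes $O(\log^2 N)=\tO{1}$ time.

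A point query for a dyadic interval $D$ returns the minimum over rows of its cell. Because all $y_i\ge 0$, every cell is at least the true weight $f_D$, which gives the one-sided lower bound. For a fixed row, the expected collision mass is at most $W/w$. By Markov's inequality, the excess exceeds $W/k'$ with probability at most $1/2$. Taking the minimum over $d$ independent rows, the excess exceeds $W/k'$ with probability at most $2^{-d}$.

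The answer to a range query is the sum of the estimates of the at most $2\log N$ dyadic pieces. Its error is therefore at most $2\log N\cdot W/k'=W/k$, and it remains an overestimate.

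\textbf{Uniformity over all intervals.} This is the step that needs care, since there are $\Theta(N^2)$ intervals. The fix is to union bound over the fewer than $2N$ \emph{dyadic} intervals, not over all intervals. If every dyadic point estimate is within $W/k'$, then every range query is correct deterministically. Choosing $d=c\log N$ gives failure probability at most $2N\cdot 2^{-d}\le 1/\mathrm{poly}(N)$.

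\textbf{Space.} The structure has $(\log N+1)\cdot d\cdot w=O(k\log^3 N)$ counters. Each counter is at most $W\le mM$ and so fits in $O(\log(mM))=O(\log m)$ bits, using $M=\mathrm{poly}(m)$. Adding the $O(\log^2 N)$ bits for the hash functions, the total is $\tO{k}$ bits, as claimed.
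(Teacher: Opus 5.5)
Your proposal is correct and is the standard dyadic Count--Min argument of Cormode and Muthukrishnan; the paper does not prove this theorem itself but only cites that construction, and union-bounding over the fewer than $2N$ dyadic intervals (rather than all $\Theta(N^2)$ ranges) is the right way to get the simultaneous guarantee. One cosmetic slip: with $(\log N+1)d$ separate hash functions the seeds take $O(\log^3 N)$ bits rather than $O(\log^2 N)$, which is still absorbed in $\tO{k}$.
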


We outline our algorithm below.

\begin{tcolorbox}[
  title=Algorithm for estimating regression split,
  fonttitle=\bfseries, breakable]

\begin{enumerate}
  \item During the stream, maintain a reservoir sample of size $K$, and after the stream define
\[
S:=\{0,N\}\cup \{x,\ x-1 : (x,y)\text{ appears in the reservoir and }x>1\}.
\]

  \item Simultaneously, maintain three Count--Min range-query sketches with $k=1/\beta$ for the streams
\[
  \{ (x_i, 1) \}_{i\in [m]}, \qquad \{ (x_i, y_i) \}_{i\in [m]}, \qquad \{ (x_i, y_i^2) \}_{i\in [m]}.
\]
That is when $(x_i,y_i)$ arrives, we insert $(x_i,1)$ into the first sketch, $(x_i,y_i)$ into the second, and $(x_i,y_i^2)$ into the third.

  \item For any interval $R\subseteq[N]$, querying the sketches returns estimates $\widehat n_R,\widehat s_R,\widehat q_R$. Set
\[
\widehat{\SSE}(R):=
\begin{cases}
\widehat q_R-\dfrac{\widehat s_R^2}{\widehat n_R + \beta m}, & \widehat n_R>0,\\
0, & \widehat n_R=0.
\end{cases}
\]

  \item For every candidate split $j\in S$, define
\[
\widehat L_{\mathrm{MSE}}(j):=\frac{1}{m}\Bigl(\widehat{\SSE}([1,j])+\widehat{\SSE}([j+1,N])\Bigr).
\]

  \item Output $\widehat j:=\arg\min_{j\in S}\widehat L_{\mathrm{MSE}}(j).$
\end{enumerate}
\end{tcolorbox}

\begin{lemma}
\label{lem:cms-moments-independent}
With high probability, the following holds simultaneously for all intervals \(R\subseteq[N]\):
\[
\widehat n_R=n_R+a_R,\qquad
\widehat s_R=s_R+b_R,\qquad
\widehat q_R=q_R+c_R,
\]
where
\[
0\le a_R\le \beta m,\qquad
0\le b_R\le \beta M m,\qquad
0\le c_R\le \beta M^2 m.
\]
\end{lemma}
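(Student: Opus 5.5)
The plan is to apply Theorem~\ref{thm:cms-range} three times, once to each of the three derived streams, with $k=1/\beta$. Then I bound the total weight $W$ of each stream and finish with a union bound over the three failure events.

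\textbf{Count sketch.} The first sketch is fed the stream $\{(x_i,1)\}_{i\in[m]}$. Here the range quantity $f_R$ is exactly $n_R$, and the total weight is $W=\sum_i 1=m$. Theorem~\ref{thm:cms-range} then gives, with high probability and simultaneously for all intervals $R$,
\[
n_R\le \widehat n_R\le n_R+\beta m .
\]
Setting $a_R:=\widehat n_R-n_R$ yields $0\le a_R\le\beta m$.

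\textbf{First- and second-moment sketches.} The second stream $\{(x_i,y_i)\}$ has $f_R=s_R$ and total weight $W=\sum_i y_i\le Mm$, since every $y_i\le M$. Hence $0\le b_R\le W/k\le \beta M m$. The third stream $\{(x_i,y_i^2)\}$ has $f_R=q_R$ and $W=\sum_i y_i^2\le M^2 m$, so $0\le c_R\le\beta M^2m$.

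\textbf{Applicability to the third stream.} One technical point is that Theorem~\ref{thm:cms-range} is stated for weights in $\{0,\dots,M\}$, while the third stream has weights $y_i^2\in\{0,\dots,M^2\}$. I would handle this by reapplying the theorem with label range $M^2$ in place of $M$. The Count--Min guarantee only uses nonnegativity of the insertions, and $M^2=\mathrm{poly}(m)$, so counters still fit in $O(\log m)$ bits and the space stays $\tO{k}$. The same remark covers the second stream, where the theorem applies directly.

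\textbf{Combining.} Each of the three guarantees holds for all intervals simultaneously with high probability, say probability at least $1-1/\mathrm{poly}(N)$. A union bound over the three sketches keeps the overall failure probability at most $3/\mathrm{poly}(N)$, which is still high probability. Note that the three sketches may even share hash functions (the "coupled" version), because the union bound does not require independence between them.

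The only step that needs any care is the applicability of Theorem~\ref{thm:cms-range} to the squared-label stream; everything else is a direct substitution.
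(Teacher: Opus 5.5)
Your proposal is correct and follows the paper's proof: apply Theorem~\ref{thm:cms-range} separately to the three streams with $k=1/\beta$, and bound the total weights by $m$, $Mm$, and $M^2m$. Your remarks on the $\{0,\ldots,M^2\}$ weight range of the third stream and the union bound over the three sketches spell out details the paper leaves implicit.
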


\begin{proof}
Apply Theorem~\ref{thm:cms-range} with \(k=\Theta(1/\beta)\) separately to the three insertion-only streams
\[
\{(x_i,1)\}_{i=1}^m,\qquad
\{(x_i,y_i)\}_{i=1}^m,\qquad
\{(x_i,y_i^2)\}_{i=1}^m.
\]
For the first stream, the total weight is \(m\). For the second stream, the total weight is at most \(mM\). Finally, for the third stream, the total weight is at most \(mM^2\). The claim follows.
\end{proof}

\begin{lemma}
\label{lem:loss-approximation}
Condition on the event in Lemma~\ref{lem:cms-moments-independent}. Then for every candidate split \(j\in S\),
\[
|\widehat L_{\mathrm{MSE}}(j)-L_{\mathrm{MSE}}(j)|\le \frac{\epsilon}{4}.
\]
\end{lemma}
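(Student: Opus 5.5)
The plan is to prove a per-interval bound
\[
|\widehat{\SSE}(R)-\SSE(R)|\le 4\beta M^2 m=\frac{\epsilon m}{8}
\]
for every interval $R$. Identity~\eqref{eq:sse-identity} and the definition of $\widehat L_{\mathrm{MSE}}$ then give, for every $j\in S$,
\[
|\widehat L_{\mathrm{MSE}}(j)-L_{\mathrm{MSE}}(j)|\le \tfrac1m\Bigl(|\widehat{\SSE}(L_j)-\SSE(L_j)|+|\widehat{\SSE}(R_j)-\SSE(R_j)|\Bigr)\le \tfrac{\epsilon}{4}.
\]
Throughout I write $\widehat n_R=n_R+a_R$, $\widehat s_R=s_R+b_R$, $\widehat q_R=q_R+c_R$ as in Lemma~\ref{lem:cms-moments-independent}. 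I also set $D:=a_R+\beta m$, so that $\beta m\le D\le 2\beta m$. Two trivial facts will be used: $0\le s_R\le Mn_R$ and $q_R\le M^2 n_R$, both because labels lie in $\{0,\ldots,M\}$.

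\textbf{Case $n_R=0$.} Here $s_R=q_R=0$ and $\SSE(R)=0$. If $\widehat n_R=0$, the estimate is also $0$. Otherwise $\widehat{\SSE}(R)=c_R-b_R^2/D$, and I bound each term:
\[
0\le c_R\le \beta M^2 m,\qquad
0\le \frac{b_R^2}{D}\le \frac{(\beta Mm)^2}{\beta m}=\beta M^2 m.
\]
So the error is at most $\beta M^2 m$. Note that the case $n_R>0$ always has $\widehat n_R>0$, since $a_R\ge 0$.

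\textbf{Case $n_R>0$.} The difference is
\[
\widehat{\SSE}(R)-\SSE(R)=c_R+\frac{s_R^2}{n_R}-\frac{(s_R+b_R)^2}{n_R+D}.
\]
I bound its two sides separately.

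\emph{Upper side.} Drop $b_R\ge 0$ in the last fraction to get
\[
\frac{s_R^2}{n_R}-\frac{(s_R+b_R)^2}{n_R+D}\le \frac{s_R^2}{n_R}-\frac{s_R^2}{n_R+D}=\frac{s_R^2D}{n_R(n_R+D)}.
\]
Using $s_R\le Mn_R$, this is at most $M^2 n_R D/(n_R+D)\le M^2D\le 2\beta M^2m$. Adding $c_R\le\beta M^2 m$ gives an upper bound of $3\beta M^2 m$.

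\emph{Lower side.} Use $s_R^2/(n_R+D)\le s_R^2/n_R$ and $c_R\ge 0$ to get
\[
\frac{(s_R+b_R)^2}{n_R+D}-\frac{s_R^2}{n_R}\le \frac{2s_Rb_R+b_R^2}{n_R+D}.
\]
For the first term, $2s_Rb_R/(n_R+D)\le 2Mb_R\le 2\beta M^2 m$, using $s_R\le Mn_R$. For the second, $b_R^2/(n_R+D)\le b_R^2/(\beta m)\le \beta M^2 m$, which is exactly why the shift $\beta m$ is added in the denominator. So the lower side is at most $3\beta M^2 m$.

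Combining both cases, $|\widehat{\SSE}(R)-\SSE(R)|\le 4\beta M^2 m=\epsilon m/8$, and the lemma follows from the reduction in the first paragraph.

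The main obstacle is the lower-side control of $b_R^2/(n_R+D)$ when $n_R$ is tiny, where the squared additive error is not damped by a large denominator. The plan handles it through the artificial $+\beta m$ term in the denominator. The price is a bias of order $M^2 D$ on the upper side, and that bias is still within budget.
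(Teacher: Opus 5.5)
Your proof is correct and follows essentially the same argument as the paper: the same case split on $n_R=0$ versus $n_R>0$, the same upper bound obtained by dropping the $b_R$ terms and using $s_R\le Mn_R$, and the same lower bound that controls $2s_Rb_R$ via $s_R\le Mn_R$ and $b_R^2$ via the $\beta m$ shift, giving $3\beta M^2 m$ in each direction. The only differences are cosmetic. Your $D$ is the paper's $D$ minus $n_R$, and you round the per-interval bound up to $4\beta M^2 m$, which gives exactly $\epsilon/4$ rather than the paper's $3\epsilon/16$.
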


\begin{proof}
Fix $j\in S$ and let $R$ be either $L_j$ or $R_j$. Set $D:=\widehat n_R+\beta m=n_R+a_R+\beta m$, so $D\ge n_R$ and $D\ge\beta m$.

If $n_R=0$, then $s_R=q_R=0$ and $\SSE(R)=0$. If $\widehat n_R=0$, then $\widehat{\SSE}(R)=0=\SSE(R)$, so the claim is immediate. Otherwise, if $\widehat n_R>0$, then since $s_R=q_R=0$, Lemma~\ref{lem:cms-moments-independent} gives $\widehat q_R=q_R+c_R=c_R$ and $\widehat s_R=s_R+b_R=b_R$; substituting into $\widehat{\SSE}(R)=\widehat q_R-\widehat s_R^2/D$ and using $\SSE(R)=0$ yields
\[
\widehat{\SSE}(R)-\SSE(R)=c_R-\frac{b_R^2}{D}.
\]
Since $c_R\ge 0$ we get $\widehat{\SSE}(R)-\SSE(R)\le c_R\le\beta M^2m$, and since $D\ge\beta m$ we get
\[
\widehat{\SSE}(R)-\SSE(R)\ge -\frac{b_R^2}{D}\ge -\frac{(\beta Mm)^2}{\beta m}=-\beta M^2m.
\]
Hence $|\widehat{\SSE}(R)-\SSE(R)|\le\beta M^2m$.

Henceforth assume $n_R\ge 1$. Observe that
\[
\widehat{\SSE}(R)-\SSE(R)
= \Bigl(q_R+c_R - \frac{(s_R+b_R)^2}{D}\Bigr) - \Bigl(q_R - \frac{s_R^2}{n_R}\Bigr)
= c_R + \frac{s_R^2}{n_R} - \frac{(s_R+b_R)^2}{D}.
\]
We can upper bound the difference as follows:
\begin{align*}
\widehat{\SSE}(R)-\SSE(R)
&= c_R + \frac{s_R^2}{n_R} - \frac{(s_R+b_R)^2}{D} \\
&\le c_R + \frac{s_R^2}{n_R} - \frac{s_R^2}{D}
  &&\text{(drop $-2s_Rb_R/D - b_R^2/D \le 0$)}\\
&= c_R + \frac{s_R^2(D-n_R)}{n_RD} \\
&\le \beta M^2m + \frac{M^2n_R^2\cdot 2\beta m}{n_R D}
  &&\text{($s_R\le Mn_R$, and $D-n_R\le 2\beta m$)}\\
&\le 3\beta M^2m.
  &&\text{($n_R/D \le 1$)}
\end{align*}
We now lower bound the difference:
\begin{align*}
\widehat{\SSE}(R)-\SSE(R)
&= c_R + \frac{s_R^2}{n_R} - \frac{(s_R+b_R)^2}{D} \\
&\ge - \frac{2s_Rb_R}{D} - \frac{b_R^2}{D}
  &&\text{(drop $c_R\ge 0$ and $s_R^2/n_R\ge s_R^2/D\ge 0$)}\\
&\ge -\frac{2Mn_R\cdot\beta Mm}{n_R} - \frac{(\beta Mm)^2}{\beta m}
  &&\text{($D\ge n_R$; and $D\ge\beta m$)}\\
&= -2\beta M^2m - \beta M^2m = -3\beta M^2m.
\end{align*}
Hence $|\widehat{\SSE}(R)-\SSE(R)|\le 3\beta M^2m$. Applying this to both sides and using $mL_{\mathrm{MSE}}(j)=\SSE(L_j)+\SSE(R_j)$,
\[
|\widehat L_{\mathrm{MSE}}(j)-L_{\mathrm{MSE}}(j)|\le \frac{6\beta M^2m}{m} = 6\beta M^2 = \frac{6M^2\epsilon}{32M^2} = \frac{3\epsilon}{16} < \frac{\epsilon}{4}.\qedhere
\]
\end{proof}

The next lemma is from Pham et al. \cite{PhamTV25}. We include the proof, simplified and with the constant \(4\) removed, in the appendix for completeness. It bounds the difference between the squared loss at splits $j'$ and $j$ based on the number of data points in $(j, j']$. 

\begin{lemma}\label{lem:split-shift}
Let $0\le j<j'\le N$, and let $b:=\bigl|\{i:\ j<x_i\le j'\}\bigr|$.
Then
\[
|L_{\mathrm{MSE}}(j')-L_{\mathrm{MSE}}(j)|\le \frac{bM^2}{m}.
\]
\end{lemma}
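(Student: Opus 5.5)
The plan is to use the fact that each side's squared error is minimized by its mean. Then $mL_{\mathrm{MSE}}$ at one split is at most the cost of a \emph{suboptimal} pair of constant predictors borrowed from the other split. Let $B:=\{i: j<x_i\le j'\}$, so $|B|=b$. When we move from split $j$ to split $j'$, exactly the points in $B$ migrate from the right side to the left side. All other points stay on the same side.

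\textbf{Key fact.} For any multiset of labels indexed by $T$ and any constant $c\in\mathbb{R}$,
\[
\sum_{i\in T}(y_i-c)^2\ \ge\ \sum_{i\in T}(y_i-\bar y_T)^2 ,
\]
with both sides read as $0$ when $T=\emptyset$. Combined with \eqref{eq:sse-identity}, this gives, for any $c_1,c_2\in\mathbb{R}$,
\[
mL_{\mathrm{MSE}}(j)\ \le\ \sum_{i:x_i\le j}(y_i-c_1)^2+\sum_{i:x_i>j}(y_i-c_2)^2 .
\]

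\textbf{Step 1: $mL_{\mathrm{MSE}}(j')\le mL_{\mathrm{MSE}}(j)+bM^2$.} Take $c_1:=\mu(j)$ and $c_2:=\gamma(j)$. If a side at split $j$ is empty, the corresponding mean is undefined; in that case set the constant to $0$, which still lies in $[0,M]$. Evaluate the upper bound above at split $j'$ with these constants. Compared with $mL_{\mathrm{MSE}}(j)$, the only change is that each $i\in B$ now pays $(y_i-\mu(j))^2$ instead of $(y_i-\gamma(j))^2$. Hence
\[
mL_{\mathrm{MSE}}(j')\le mL_{\mathrm{MSE}}(j)+\sum_{i\in B}\bigl[(y_i-\mu(j))^2-(y_i-\gamma(j))^2\bigr]\le mL_{\mathrm{MSE}}(j)+\sum_{i\in B}(y_i-\mu(j))^2 .
\]
Since $y_i$ and $\mu(j)$ both lie in $[0,M]$ (a mean of labels in $\{0,\dots,M\}$, or the default $0$), each term is at most $M^2$. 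This gives the bound.

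\textbf{Step 2: $mL_{\mathrm{MSE}}(j)\le mL_{\mathrm{MSE}}(j')+bM^2$.} This is symmetric. Use $\mu(j'),\gamma(j')$ as constants for split $j$, so the points of $B$ move back to the right side and pay $(y_i-\gamma(j'))^2\le M^2$ instead of $(y_i-\mu(j'))^2\ge0$.

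Dividing both inequalities by $m$ gives the claim. The only delicate point is the empty-side case, where $\mu$ or $\gamma$ is undefined. Choosing a default constant in $[0,M]$ handles it, because the empty set contributes $0$ to $\SSE$ while the optimality inequality still holds.
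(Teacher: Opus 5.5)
Your proof is correct and is essentially the paper's argument: you reuse the optimal side constants of one split as a feasible choice for the other, so only the $b$ migrating points change cost, each by at most $M^2$, and then you swap roles. Your explicit treatment of empty sides, via a default constant in $[0,M]$, fills in a detail that the paper's proof leaves implicit.
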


The next lemma is also adapted from \cite{PhamTV25}. Roughly speaking, if we sample $\approx C M^2/\epsilon \cdot \log N $ data points for some sufficiently large constant $C$, then the probability that an interval with at least $\epsilon  m/M^2$ data points has no data point in the sample is upper bounded by 
\[
  \approx \left( 1- \frac{C M^2 \log N}{m \epsilon} \right)^{\epsilon  m/M^2} \le e^{- \frac{\epsilon  m}{M^2} \cdot \frac{C M^2 \log N}{m \epsilon }} \le \frac{1}{N^4}.
\]
Hence, taking a union bound over $\binom{N}{2}$ intervals, we deduce that this could not happen with high probability. This implies that one of the candidate splits must be a good approximation.

\begin{lemma}
\label{lem:heavy-hit}
With probability at least $1-1/N^2$, every interval $(a,b]\subseteq[N]$ containing more than
$\tau m$ stream items contains at least one sampled item from the final reservoir of size $K=\lceil C\log N/\tau\rceil$, for a sufficiently large constant $C$.
\end{lemma}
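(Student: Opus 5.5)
The plan is to fix one interval, bound the probability that the reservoir misses it, and then take a union bound over all intervals.

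\textbf{Sampling model.} A reservoir of size $K$ maintained with Vitter's algorithm is, at the end of the stream, a uniformly random $K$-subset of the $m$ stream positions when $m\ge K$. If $m< K$, the reservoir holds every item and the claim is trivial. So assume $m\ge K$ and let $T\subseteq[m]$ denote the set of sampled positions.

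\textbf{Fixed interval.} Fix an interval $I=(a,b]$ and let $P=\{i: x_i\in I\}$ with $|P|=p>\tau m$. The event ``$T\cap P=\emptyset$'' has probability
\[
\binom{m-p}{K}\Big/\binom{m}{K}=\prod_{t=0}^{K-1}\frac{m-p-t}{m-t}\le\Bigl(1-\frac{p}{m}\Bigr)^K\le e^{-pK/m}< e^{-\tau K}.
\]
Each factor satisfies $(m-p-t)/(m-t)\le (m-p)/m$, and the product is $0$ if $m-p<K$, so the bound still holds in that case. With $K\ge C\log N/\tau$ this gives a failure probability at most $N^{-C}$.

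\textbf{Union bound.} There are at most $\binom{N+1}{2}\le N^2$ intervals $(a,b]$ with $0\le a<b\le N$. A union bound gives total failure probability at most $N^{2-C}\le 1/N^2$ for $C\ge 4$ (and $N\ge2$). Note that the statement quantifies over position sets determined by the stream, not by the random sample, so the union bound over a fixed family of $N^2$ intervals is valid.

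\textbf{Main subtlety.} The only delicate point is justifying that the final reservoir is a uniform $K$-subset even though $m$ is unknown in advance. This is the standard invariant of reservoir sampling and can be cited; with that in hand, the remaining steps are routine calculation. An alternative that avoids exact uniformity is to use $K$ independent size-one reservoirs, i.e.\ sampling with replacement. In that case the miss probability is exactly $(1-p/m)^K$, and the same computation goes through.
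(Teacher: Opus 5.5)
Your proposal is correct and follows essentially the same route as the paper. Both bound the hypergeometric miss probability for a fixed interval by $(1-p/m)^K\le e^{-\tau K}\le N^{-C}$, dispose of the small-$m$ case as trivial, and take a union bound over at most $N^2$ intervals with $C\ge 4$. Your handling of the case $m-p<K$ and your remark on reservoir uniformity are correct refinements of the same argument.
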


\begin{lemma}
\label{lem:candidate-quality}
With probability at least $1-1/N^2$, there exists a candidate $j\in S$ such that
\[
L_{\mathrm{MSE}}(j)\le \OPT+\frac{\epsilon}{16}.
\]
\end{lemma}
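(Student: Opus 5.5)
Let $j^\star$ be an optimal split, so $L_{\mathrm{MSE}}(j^\star)=\OPT$. The plan is to condition on the event of Lemma~\ref{lem:heavy-hit}, which holds with probability at least $1-1/N^2$. Under this event we exhibit a candidate in $S$ whose split point is separated from $j^\star$ by at most $\tau m$ stream items. Lemma~\ref{lem:split-shift} then converts this into a loss bound of $\tau M^2=\epsilon/16$. If $j^\star\in\{0,N\}$ we are done immediately, since $0,N\in S$, so assume $0<j^\star<N$.

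The candidate is chosen by looking to the right of $j^\star$. Let $x^\ast$ be the smallest sampled feature value $x$ (in the final reservoir) with $x>j^\star$, if one exists, and set $j:=x^\ast-1\ge j^\star$. Since $x^\ast>j^\star\ge 1$, we have $x^\ast>1$, so $j=x^\ast-1\in S$ by construction. If no sampled value exceeds $j^\star$, set $j:=N\in S$ instead.

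Next we bound the number of items between $j^\star$ and $j$. By the choice of $x^\ast$, the interval $(j^\star,j]=(j^\star,x^\ast-1]$ contains no sampled item; in the fallback case the same holds for $(j^\star,N]$. If $j>j^\star$, this is a nonempty interval in $[N]$ with no sampled item, so by the contrapositive of Lemma~\ref{lem:heavy-hit} it contains at most $\tau m$ stream items. If $j=j^\star$, the count is $0$.

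Lemma~\ref{lem:split-shift}, applied with $b\le\tau m$, then gives
\[
L_{\mathrm{MSE}}(j)\le L_{\mathrm{MSE}}(j^\star)+\frac{\tau m M^2}{m}=\OPT+\frac{\epsilon}{16},
\]
using $\tau=\epsilon/(16M^2)$.

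The only delicate step is ensuring the chosen $j$ actually lies in $S$ and that the gap interval contains no sampled point. This is exactly why $S$ includes both $x$ and $x-1$ for each sampled $x$, together with the endpoints $0$ and $N$. The degenerate case $M=0$ is trivial, since then every split has loss $0$.
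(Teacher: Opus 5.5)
Your proof is correct and takes essentially the same approach as the paper: condition on the event of Lemma~\ref{lem:heavy-hit}, find a candidate $j\ge j^\star$ in $S$ such that $(j^\star,j]$ contains no sampled point, and apply Lemma~\ref{lem:split-shift}. The only difference is the choice of $j$. The paper takes $j=\min\{t\in S:t>j^\star\}$ and then needs a short case analysis to show the gap has no sampled point, whereas you take $j=x^\ast-1$ (or $N$) directly, which makes the gap empty by construction and uses only the $x-1$ candidates.
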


\begin{proof}
Let $j^\star$ be an optimal split, so $L_{\mathrm{MSE}}(j^\star)=\OPT$. Condition on the event in
Lemma~\ref{lem:heavy-hit}.

If $j^\star\in S$, then we are done. Otherwise, since $N\in S$, the set $\{t\in S:t>j^\star\}$ is nonempty. Let
\[
j:=\min\{t\in S:t>j^\star\}.
\]
We claim that the interval $(j^\star,j]$ contains no sampled feature value.

Indeed, if some sampled point had feature value $x\in(j^\star,j)$, then $x\in S$, contradicting
the minimality of $j$. If some sampled point had feature value $x=j$, then $x-1\in S$ by the
definition of $S$, and since $x-1\ge j^\star$, this again contradicts the minimality of $j$
unless $x-1=j^\star$, in which case $j^\star\in S$, contradiction.

Hence $(j^\star,j]$ contains no sampled point. By the event in Lemma~\ref{lem:heavy-hit}, this
interval must therefore contain at most $\tau m$ stream items. Applying
Lemma~\ref{lem:split-shift},
\[
L_{\mathrm{MSE}}(j)-L_{\mathrm{MSE}}(j^\star)\le \frac{\tau m\cdot M^2}{m}=\tau M^2=\frac{\epsilon}{16}.
\]
Since $L_{\mathrm{MSE}}(j^\star)=\OPT$, we conclude that
\[
L_{\mathrm{MSE}}(j)\le \OPT+\frac{\epsilon}{16}. \qedhere
\]
\end{proof}

We now put it all together to prove Theorem~\ref{thm:one-pass-regression}.

\begin{proof}[Proof of Theorem~\ref{thm:one-pass-regression}]
If $M^2\le \epsilon/16$ then we output $\widehat{j} = 0$; the average squared loss is at most $M^2 \le \epsilon/16 < \epsilon$. Henceforth assume $M^2 > \epsilon/16$.

The algorithm maintains a reservoir sample of size $K$ and range-query Count--Min sketches with parameter $\beta=\Theta(\epsilon/M^2)$. The reservoir has $O(1)$ update time and the sketch has $\tO{1}$ update time.

By Lemma~\ref{lem:candidate-quality}, with probability at least $1-1/N^2$, there exists
$j\in S$ such that
\[
L_{\mathrm{MSE}}(j)\le \OPT+\frac{\epsilon}{16}.
\]
By Lemma~\ref{lem:loss-approximation}, with probability at least $1-1/N^2$, every
candidate satisfies $|\widehat L_{\mathrm{MSE}}(j)-L_{\mathrm{MSE}}(j)|\le \epsilon/4$.
Hence both events hold simultaneously with high probability. Since $\widehat j$ minimizes $\widehat L_{\mathrm{MSE}}$ over $S$,
\[
L_{\mathrm{MSE}}(\widehat j)
\le \widehat L_{\mathrm{MSE}}(\widehat j)+\frac{\epsilon}{4}
\le \widehat L_{\mathrm{MSE}}(j)+\frac{\epsilon}{4}
\le L_{\mathrm{MSE}}(j)+\frac{\epsilon}{2}
\le \OPT+\frac{\epsilon}{16}+\frac{\epsilon}{2}
< \OPT+\epsilon. \qedhere
\]
\end{proof}

\subsection{One-pass additive approximation for classification with Gini impurity}\label{sec:one-pass-gini}

In this section, we improve the $\tO{1/\epsilon^2}$-space algorithm of \cite{PhamTV25} to $\tO{1/\epsilon}$ to match our lower bound in the next section.  For a split $j$, write
\[
a=f_{+1,[1,j]},\qquad b=f_{-1,[1,j]},\qquad c=f_{+1,[j+1,N]},\qquad d=f_{-1,[j+1,N]}.
\]
The Gini impurity of a multiset $S$ of labels is $1 - \sum_{y} \left(\frac{|S_y|}{|S|}\right)^2$, where $S_y = \mset{s \in S : s = y}$. We adopt the convention $\operatorname{Gini}(\emptyset)=0$. For example, if all labels are identical then the impurity is $1 - 1^2 = 0$ (a pure node), whereas if the labels are split evenly between $+1$ and $-1$ the impurity is $1 - \left(\tfrac{1}{2}\right)^2 - \left(\tfrac{1}{2}\right)^2 = \tfrac{1}{2}$, which is the maximum.

For binary labels this equals $\frac{2pq}{(p+q)^2}$ where $p = |S_{+1}|$ and $q = |S_{-1}|$.
The Gini loss of split $j$ is the weighted sum of impurities of the two children:
\[
L_{\mathrm{Gini}}(j) = \frac{a+b}{m}\cdot\frac{2ab}{(a+b)^2} + \frac{c+d}{m}\cdot\frac{2cd}{(c+d)^2}=
\frac{2ab}{m(a+b)}+\frac{2cd}{m(c+d)}.
\]

\begin{lemma}\label{lem:gini-shift}
For any splits $j < j'$ in $[N]$, let $\ell = \bigl|\{ i : j < x_i \le j' \}\bigr|$ be the number of data points in the interval $(j, j']$. Then
\[
\bigl|L_{\text{Gini}}(j) - L_{\text{Gini}}(j')\bigr| \le \frac{2\ell}{m}.
\]
\end{lemma}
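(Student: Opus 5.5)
Moving a single point across the split changes the Gini loss by at most $2/m$, and the $\ell$ points in $(j,j']$ can be moved one at a time. Write $m L_{\mathrm{Gini}}(j)=g(a,b)+g(c,d)$ with $g(p,q):=\frac{2pq}{p+q}$ for $p+q>0$ and $g(0,0):=0$. Going from $j$ to $j'$ transfers exactly the $\ell$ points with $x_i\in(j,j']$ from the right child to the left child; each transfer either increments $a$ and decrements $c$, or increments $b$ and decrements $d$. We order these $\ell$ points arbitrarily and move them one at a time, obtaining a chain of $\ell+1$ label-count configurations from $(a,b,c,d)$ at $j$ to the counts at $j'$. The intermediate configurations need not correspond to actual splits; the formula is simply evaluated on counts. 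By the triangle inequality it suffices to show that a single transfer changes $g(a,b)+g(c,d)$ by at most $2$, which gives the bound $2\ell/m$ after dividing by $m$.

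For one transfer, we bound the change of each child separately. The key step is the discrete Lipschitz bound
\[
|g(p+1,q)-g(p,q)|\le 1 \qquad\text{for all integers } p,q\ge 0,
\]
and symmetrically in $q$. Then one transfer changes the left child's $g$ by at most $1$ and the right child's $g$ by at most $1$, for a total of at most $2$.

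To prove the Lipschitz bound, first handle $p=q=0$: here $g(1,0)=0=g(0,0)$. Otherwise $g$ is smooth on the region $p+q>0$. The partial derivative is
\[
\partial_p g=\frac{2q(p+q)-2pq}{(p+q)^2}=\frac{2q^2}{(p+q)^2}\in[0,2],
\]
so the mean value theorem only gives a bound of $2$; this needs to be sharpened. A direct computation gives
\[
g(p+1,q)-g(p,q)=\frac{2q^2}{(p+q)(p+q+1)},
\]
which is nonnegative. When $p\ge 1$ or $q=0$, this is at most $\frac{2q^2}{(q+1)(q+2)}<2$, and the case $p=0$ gives $\frac{2q}{q+1}<2$.

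The bound of $1$ therefore fails in general; for instance $p=0$, $q$ large gives a difference close to $2$. This is the expected obstacle, and the remedy is to exploit cancellation between the two children. Under a single transfer, the left child's $g$ changes by $\frac{2b^2}{(a+b)(a+b+1)}\ge 0$, and the right child's $g$ changes by $-\frac{2d^2}{(c+d-1)(c+d)}\le 0$, writing $(c,d)$ for the right counts before removal. The two changes have opposite signs and each has absolute value at most $2$. Hence the total absolute change per transfer is at most $\max$ of the two magnitudes, which is at most $2$. Summing over the $\ell$ transfers and dividing by $m$ yields $|L_{\mathrm{Gini}}(j)-L_{\mathrm{Gini}}(j')|\le 2\ell/m$.

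The step needing care is the degenerate case in which a child becomes empty. Each increment formula must be checked directly against the convention $g(0,0)=0$ (for example $g(1,0)-g(0,0)=0$), and when the right child has a single point the removal difference is $0$. After these checks the argument is routine.
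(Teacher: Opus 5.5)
Your argument is correct, but it takes a different route from the paper.

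\textbf{The paper's route.} The paper re-encodes labels as $z_i=(y_i+1)/2\in\{0,1\}$. It checks that a side with $p$ ones and $q$ zeros has squared error $pq/(p+q)$, so $L_{\mathrm{Gini}}(j)=2L_{\mathrm{MSE}}(j)$ on the re-encoded stream. It then invokes Lemma~\ref{lem:split-shift} with $M=1$. That lemma is a variational argument that handles all $\ell$ points at once. The optimal side constants $(\alpha,\beta)$ of one split are feasible for the other split. Hence the loss difference is at most $\frac1m\sum_{i:\,j<x_i\le j'}\bigl((z_i-\alpha)^2-(z_i-\beta)^2\bigr)$. Each summand is a difference of two numbers in $[0,1]$, so it is at most $1$.

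\textbf{Your route.} You work directly with $g(p,q)=2pq/(p+q)$ and telescope over single-point transfers. You compute the exact increment $g(p+1,q)-g(p,q)=2q^2/\bigl((p+q)(p+q+1)\bigr)\in[0,2)$. You then use that the gaining child's term rises while the losing child's term falls. This sign observation plays the same role as the paper's difference-of-two-squares step. Without it, bounding the two children separately would only give $4\ell/m$.

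\textbf{What each buys.} The paper's route is shorter and needs essentially no case analysis for empty children. Its identity $L_{\mathrm{Gini}}=2L_{\mathrm{MSE}}$ is also reused for both the Gini algorithm and the Gini lower bound. Yours is self-contained and avoids the detour through squared loss. It also gives exact per-transfer increments; for example, each child's term is monotone in its counts.

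\textbf{Presentation.} In a write-up, drop the initially announced bound $|g(p+1,q)-g(p,q)|\le 1$, which you yourself show is false. State the $[0,2)$ bound together with the opposite-sign structure from the outset.
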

\begin{proof}[Proof sketch]
We re-encode each label $y_i\in\{-1,+1\}$ as
$z_i=(y_i+1)/2\in\{0,1\}$ and establish the identity $2L_{\mathrm{MSE}}(j)=L_{\mathrm{Gini}}(j)$. 
The Lipschitz bound then follows by applying Lemma~\ref{lem:split-shift} with $M=1$.
\end{proof}

\begin{theorem}\label{thm:one-pass-gini}
Fix $\epsilon\in(0,1)$. There exists a randomized one-pass streaming algorithm for the Gini split problem that uses $\tO{1/\epsilon}$ space, has $\tO{1}$ update time, and with high probability outputs $\widehat{j}\in\{0,\ldots,N\}$ satisfying $L_{\mathrm{Gini}}(\widehat{j})\le\OPT_{\mathrm{Gini}}+\epsilon$.
\end{theorem}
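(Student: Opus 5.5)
The plan is to reduce the Gini problem to regression with $M=1$ and then invoke Theorem~\ref{thm:one-pass-regression}. Re-encode each arriving label $y_i\in\{-1,+1\}$ on the fly as $z_i=(y_i+1)/2\in\{0,1\}$, and feed the stream $(x_i,z_i)$ to the regression algorithm. This re-encoding is a constant-time, stateless map, so one pass and $\tO{1}$ update time are preserved.

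The key step is the identity $L_{\mathrm{Gini}}(j)=2L_{\mathrm{MSE}}(j)$, where $L_{\mathrm{MSE}}$ is computed on the re-encoded stream. This identity is already used in the sketch of Lemma~\ref{lem:gini-shift}; I would verify it directly. On the left side of split $j$ there are $a$ ones and $b$ zeros. The SSE there is $q-s^2/n=a-a^2/(a+b)=ab/(a+b)$, and when $a+b=0$ both sides are $0$. Symmetrically, the right side contributes $cd/(c+d)$. Hence
\[
mL_{\mathrm{MSE}}(j)=\frac{ab}{a+b}+\frac{cd}{c+d}=\frac{m}{2}L_{\mathrm{Gini}}(j),
\]
and consequently $\OPT_{\mathrm{Gini}}=2\,\OPT_{\mathrm{MSE}}$, with the same minimizers.

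Now run the algorithm of Theorem~\ref{thm:one-pass-regression} with $M=1$ and accuracy parameter $\epsilon/2$. It uses $\tO{1/\epsilon}$ space and returns $\widehat j$ with $L_{\mathrm{MSE}}(\widehat j)\le \OPT_{\mathrm{MSE}}+\epsilon/2$ with high probability. Multiplying by $2$ gives $L_{\mathrm{Gini}}(\widehat j)\le\OPT_{\mathrm{Gini}}+\epsilon$.

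I expect no step to be a real obstacle. The only care needed is bookkeeping: handling the empty-side convention $\operatorname{Gini}(\emptyset)=0=\SSE(\emptyset)$ consistently, and ensuring the halved accuracy parameter still lies in $(0,1)$, which it does.
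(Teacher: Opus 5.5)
Your proposal is correct and follows the paper's proof essentially verbatim: re-encode the labels as $z_i=(y_i+1)/2$, use $L_{\mathrm{Gini}}(j)=2L_{\mathrm{MSE}}(j)$, and run the regression algorithm of Theorem~\ref{thm:one-pass-regression} with $M=1$ and accuracy $\epsilon/2$. The only minor difference is that you verify the identity via $q_R-s_R^2/n_R$ (using $z^2=z$), whereas the paper expands $f_{+1}(1-\bar z)^2+f_{-1}\bar z^2$ directly; both give $ab/(a+b)$ per side.
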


\begin{proof}
Re-encode each label $y_i\in\{-1,+1\}$ as $z_i=(y_i+1)/2\in\{0,1\}$. As shown in the proof of Lemma~\ref{lem:gini-shift}, for every split $j$,
\[
L_{\mathrm{Gini}}(j) = 2\,L_{\mathrm{MSE}}(j),
\]
where $L_{\mathrm{MSE}}$ is the standard mean squared loss on the re-encoded stream. In particular, $\OPT_{\mathrm{Gini}}=2\,\OPT_{\mathrm{MSE}}$ and both objectives share the same optimal split.

Apply the algorithm of Theorem~\ref{thm:one-pass-regression} to the stream $(x_i,z_i)$ with $M=1$ and target error $\epsilon/2$. It uses $\tO{M^2/(\epsilon/2)}=\tO{1/\epsilon}$ space and, with high probability, outputs $\widehat{j}$ satisfying
\[
L_{\mathrm{MSE}}(\widehat{j})\le\OPT_{\mathrm{MSE}}+\frac{\epsilon}{2}.
\]
Multiplying through by $2$:
\[
L_{\mathrm{Gini}}(\widehat{j})=2\,L_{\mathrm{MSE}}(\widehat{j})\le 2\,\OPT_{\mathrm{MSE}}+\epsilon=\OPT_{\mathrm{Gini}}+\epsilon.\qedhere
\]
\end{proof}

\section{Lower bounds}\label{sec:lower_bound}

In this section, we establish one-pass space lower bounds for approximating regression and classification optimal splits. The lower bounds for numerical observations all follow the same high-level template:
we reduce from \textsc{Index} by constructing a stream with two competing adjacent candidate splits whose identity reveals the hidden bit.

\begin{lemma}[\textsc{Index} one-way lower bound~\cite{Ablayev96}]\label{lem:index}
In the one-way randomized communication problem \textsc{Index},
Alice holds $z\in\{0,1\}^n$ and Bob holds $i\in[n]$; Bob must output $z_i$.
Any one-way randomized protocol that succeeds with probability at least $2/3$
requires $\Omega(n)$ bits of communication.
\end{lemma}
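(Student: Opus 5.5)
We would prove the \textsc{Index} lower bound with a direct information-theoretic argument rather than a rectangle or discrepancy argument. We plan to put the uniform distribution on Alice's input and show that a protocol correct on every index must reveal a constant fraction of a bit of information about each coordinate. Summing over the $n$ coordinates then forces the message length to be $\Omega(n)$.

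\textbf{Setup.} Fix a one-way protocol in which the message $\Pi=\Pi(z,R_A,R)$ has length at most $c$ bits, where $R$ is the public randomness and $R_A$ is Alice's private randomness. Bob's output is $g(\Pi,i,R,R_B)$, and for every $z$ and every $i$ it equals $z_i$ with probability at least $2/3$. Let $Z$ be uniform on $\{0,1\}^n$, independent of $R$, $R_A$ and $R_B$. Everything below is conditioned on $R$; since $Z$ is independent of $R$, we have $H(Z\mid R)=n$. The first inequality is
\[
c\ \ge\ H(\Pi\mid R)\ \ge\ I(Z;\Pi\mid R).
\]

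\textbf{Chain rule.} We expand with the chain rule:
\[
I(Z;\Pi\mid R)=\sum_{i=1}^n I(Z_i;\Pi\mid R,Z_{<i}).
\]
Because the coordinates of $Z$ are independent of each other and of $R$, we have $H(Z_i\mid R,Z_{<i})=H(Z_i\mid R)=1$. Conditioning can only decrease entropy, so $H(Z_i\mid \Pi,R,Z_{<i})\le H(Z_i\mid \Pi,R)$. Together these give
\[
I(Z_i;\Pi\mid R,Z_{<i})\ \ge\ I(Z_i;\Pi\mid R)\ =\ 1-H(Z_i\mid \Pi,R).
\]

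\textbf{Fano step.} For each fixed $i$, Bob's output $g(\Pi,i,R,R_B)$ is an estimator of $Z_i$ computed from $(\Pi,R)$ and independent randomness $R_B$, and it errs with probability at most $1/3$. By Fano's inequality for a binary variable, $H(Z_i\mid \Pi,R)\le H_2(1/3)$, where $H_2$ is the binary entropy function. Combining all the steps gives
\[
c\ \ge\ n\bigl(1-H_2(1/3)\bigr)\ =\ \Omega(n).
\]

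\textbf{Main obstacle.} The delicate part is the bookkeeping of the randomness, not any single inequality. Fano must be applied to the same joint distribution used in the chain rule, which requires three things. First, the correctness guarantee holds for every fixed $z$, and therefore also on average over uniform $Z$. Second, $R_B$ is independent of $(Z,\Pi)$ given $R$, so it cannot add information about $Z_i$ beyond what $(\Pi,R)$ carries. Third, the inequality $I(Z_i;\Pi\mid R,Z_{<i})\ge I(Z_i;\Pi\mid R)$ uses that $Z_i$ is independent of $(Z_{<i},R)$. Once these three facts are checked, the remaining steps are routine.
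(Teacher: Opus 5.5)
Your proof is correct. It is the standard information-theoretic proof of the \textsc{Index} bound, and each step holds:
\begin{itemize}
\item the chain rule;
\item the superadditivity step $I(Z_i;\Pi\mid R,Z_{<i})\ge I(Z_i;\Pi\mid R)$, which uses that $Z_i$ is independent of $(Z_{<i},R)$;
\item binary Fano, where independence of $R_B$ from $(Z,\Pi,R)$ gives $H(Z_i\mid \Pi,R)=H(Z_i\mid \Pi,R,R_B)\le H_2(1/3)$.
\end{itemize}
One cosmetic caveat: if messages may have variable length, you get $H(\Pi\mid R)\le c+1$ rather than $c$. This does not affect the $\Omega(n)$ conclusion.

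The paper gives no proof of this lemma. It imports it as a black box from Ablayev's work and uses it only inside the reductions for the regression, misclassification and Gini lower bounds. So the comparison is between citing and proving:
\begin{itemize}
\item The citation keeps the paper short and leaves the details of the randomized one-way model to the literature.
\item Your derivation makes the lower-bound section self-contained. It gives an explicit constant, $c\ge (1-H_2(1/3))\,n\approx 0.08\,n$, which carries through to explicit constants in the $\Omega(M^2/\epsilon)$ and $\Omega(1/\epsilon)$ bounds.
\item Because it handles public randomness $R$ together with private coins $R_A,R_B$, it covers the streaming reduction whether the algorithm's random bits are treated as shared or drawn fresh by Bob when he resumes from Alice's memory state.
\item It actually proves the slightly stronger distributional statement (uniform $Z$, error at most $1/3$ for each fixed $i$), which is all those reductions need.
\end{itemize}
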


The main difference between the proofs lies in the loss-specific gap calculation:
quadratic for regression, direct majority/minority counting for misclassification.

\subsection{Lower bound for regression}\label{subsec:lb-regression}

Fix $\epsilon\in(0,10^{-3})$. Consider the problem of finding the optimal regression split with labels $y_i\in[0,1]$ (i.e., $M=1$)
and feature values $x_i\in[N]$. Our goal is to show that any one-pass randomized streaming algorithm that, with probability at least $2/3$, outputs a split $\widehat{j}\in[N]$ satisfying
$L_{\mathrm{MSE}}(\widehat{j})\le \opt+\epsilon$ must use $\Omega(1/\epsilon)$ bits of memory, even for instances with $N=\Theta(1/\epsilon)$.

The following lemma is well-known. It shows that given a multiset $S$ of numbers, then $\sum_{x \in S} (x - d)^2$ is minimized when $d$ is the average of elements in $S$.

\begin{lemma}\label{lem:mean-min}
Let $y_1,\dots,y_k\in\mathbb{R}$ and let $\bar y := \frac1k\sum_{t=1}^k y_t$.
Then for any $a\in\mathbb{R}$,
\[
\sum_{t=1}^k (y_t-a)^2
=
\sum_{t=1}^k (y_t-\bar y)^2
\;+\;k(a-\bar y)^2.
\]
In particular, $\sum_{t=1}^k (y_t-\bar y)^2 \le \sum_{t=1}^k (y_t-a)^2$ for all $a$.
\end{lemma}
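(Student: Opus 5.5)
The identity follows by expanding the square around the mean $\bar y$ and showing that the cross term vanishes. For any $a\in\mathbb{R}$ and each $t$, write $y_t-a=(y_t-\bar y)+(\bar y-a)$. Squaring and summing over $t=1,\dots,k$ gives
\[
\sum_{t=1}^k (y_t-a)^2=\sum_{t=1}^k (y_t-\bar y)^2+2(\bar y-a)\sum_{t=1}^k (y_t-\bar y)+k(\bar y-a)^2 .
\]

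The key observation is that $\sum_{t=1}^k (y_t-\bar y)=\sum_{t=1}^k y_t-k\bar y=0$. This holds by the definition $\bar y=\frac1k\sum_{t=1}^k y_t$. Hence the middle term is zero. Since $(\bar y-a)^2=(a-\bar y)^2$, the displayed equation reduces to exactly the identity claimed in the lemma.

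For the ``in particular'' statement, note that $k(a-\bar y)^2\ge 0$ for every real $a$. Dropping this nonnegative term from the right-hand side of the identity gives $\sum_{t=1}^k (y_t-\bar y)^2\le\sum_{t=1}^k (y_t-a)^2$. Equality holds precisely when $a=\bar y$, assuming $k\ge 1$, which is implicit since $\bar y$ must be defined.

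There is no real obstacle here. The only step needing care is the vanishing of the cross term, and that follows immediately from the definition of the mean.
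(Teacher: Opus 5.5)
Your proof is correct and takes the same approach as the paper: decompose $y_t-a=(y_t-\bar y)+(\bar y-a)$, expand, observe that the cross term vanishes because $\sum_{t=1}^k(y_t-\bar y)=0$, and obtain the inequality from $k(a-\bar y)^2\ge 0$. Your remark about the equality case when $a=\bar y$ (with $k\ge 1$) is a correct addition that the paper does not state.
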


The next lemma quantifies how much the squared error objective changes when we append a uniform block of labels to an existing set.

\begin{lemma}\label{lem:add-block}
Let $D$ be a size-$a$ multiset of reals with mean $\mu$, and define
$\SSE(D):=\sum_{y\in D}(y-\mu)^2$.
Let $G$ be a multiset of size $B$ whose elements are all equal to the same value $v$
(so $\SSE(G)=0$). Then
\[
\SSE(D\cup G)=\SSE(D)\;+\;\frac{aB}{a+B}\,(\mu-v)^2.
\]
\end{lemma}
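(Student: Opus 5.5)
The plan is to apply Lemma~\ref{lem:mean-min} to the combined multiset $D\cup G$. Its mean is
\[
\nu=\frac{a\mu+Bv}{a+B},
\]
so we compute $\SSE(D\cup G)=\sum_{y\in D}(y-\nu)^2+\sum_{y\in G}(y-\nu)^2$ and handle the two sums separately. The degenerate case $a=0$ (or $B=0$) is trivial under the convention that the correction term vanishes, so assume $a,B\ge 1$.

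For the $D$-part, apply Lemma~\ref{lem:mean-min} with the shift point equal to $\nu$:
\[
\sum_{y\in D}(y-\nu)^2=\SSE(D)+a(\nu-\mu)^2 .
\]
For the $G$-part, every element equals $v$, so the sum is exactly $B(v-\nu)^2$.

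Next substitute the formula for $\nu$. This gives
\[
\nu-\mu=\frac{B(v-\mu)}{a+B},\qquad v-\nu=\frac{a(v-\mu)}{a+B}.
\]
Hence
\[
a(\nu-\mu)^2+B(v-\nu)^2=\frac{aB^2+Ba^2}{(a+B)^2}(\mu-v)^2=\frac{aB}{a+B}(\mu-v)^2,
\]
which is the claimed identity.

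There is no real obstacle here. The only care needed is computing the two mean differences correctly and factoring $aB(a+B)$ in the final simplification.
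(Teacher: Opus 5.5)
Your proof is correct and follows the paper's argument step for step: apply Lemma~\ref{lem:mean-min} to $D$ with shift point equal to the combined mean $\nu=\frac{a\mu+Bv}{a+B}$, write the $G$-part as $B(v-\nu)^2$, and simplify using the two mean differences. One wording slip: your opening sentence says you apply the lemma to $D\cup G$, but what you actually do, and what is needed, is apply it to $D$ alone, as in your second paragraph.
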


\begin{theorem}\label{thm:lb-regression}
For all sufficiently small $\epsilon>0$, the following holds.
Consider the optimal regression split problem with labels $y_t\in[0,M]$
and feature values $x_t\in[N]$.
Any one-pass randomized streaming algorithm that, with probability at least $2/3$,
outputs a split $\widehat j\in[N]$ satisfying
\[
L_{\mathrm{MSE}}(\widehat j)\le \opt+\epsilon
\]
must use $\Omega(M^2/\epsilon)$ bits of memory.
\end{theorem}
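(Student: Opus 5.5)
We reduce from \textsc{Index} (Lemma~\ref{lem:index}) with $n=\Theta(M^2/\epsilon)$. Since the statement allows labels in $[0,M]$, we first handle the case $M=1$, where we aim for $n=\Theta(1/\epsilon)$, and then rescale. Multiplying all labels by $M$ multiplies every $L_{\mathrm{MSE}}(j)$ by $M^2$. Hence an additive-$\epsilon$ algorithm for labels in $[0,M]$ yields an additive-$(\epsilon/M^2)$ algorithm for labels in $[0,1]$. The $M=1$ bound of $\Omega(1/\epsilon')$ with $\epsilon'=\epsilon/M^2$ then gives $\Omega(M^2/\epsilon)$. So the core task is the $M=1$ case with $N=\Theta(n)$ and $n=\Theta(1/\epsilon)$.

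\textbf{Alice's part.} Alice holds $z\in\{0,1\}^n$. For each $t\in[n]$ she inserts one point at feature $2t$ with label $z_t$; equivalently, she inserts $r$ points there to set the mass scale. She then simulates the streaming algorithm and sends its memory state to Bob.

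\textbf{Bob's part.} Bob holds $i$. He appends a large block $G_L$ of $B$ points with label $0$ on the far left (feature values below all of Alice's) and a block $G_R$ of $B$ points with label $1$ on the far right. He also places blocks just around position $2i$, so that the only splits with loss within $O(\epsilon)$ of $\opt$ are $j=2i-1$ and $j=2i$. The placement is chosen so that whether Alice's point at $2i$ is better grouped left (label $0$) or right (label $1$) decides which of these two splits is optimal. Concretely, Bob puts a heavy block with label $0$ at feature $2i-1$ and a heavy block with label $1$ at feature $2i+1$. Every split far from $2i$ then incurs a large penalty, by Lemma~\ref{lem:add-block}: a block of mass $B$ is put on the "wrong" side, costing roughly $\frac{aB}{a+B}(\mu-v)^2$.

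\textbf{Gap calculation.} Using Lemma~\ref{lem:add-block} and Lemma~\ref{lem:mean-min}, we compute $m\bigl(L_{\mathrm{MSE}}(2i-1)-L_{\mathrm{MSE}}(2i)\bigr)$. The left side has mean near $0$ and the right side has mean near $1$. Moving the single point with label $z_i$ across the split changes the SSE by about $(z_i-\mu)^2-(z_i-\gamma)^2\approx \pm 1$, up to corrections of order $a/B$. So the gap is $\Theta(1)$ in total SSE, that is, $\Theta(1/m)$ in loss, with a sign determined by $z_i$. Choosing $m=\Theta(n)$ (blocks of size $\Theta(n)$, so $m\le c/\epsilon$) makes this gap exceed $2\epsilon$. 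All splits other than $2i-1$ and $2i$ must then be shown to be worse than $\opt+\epsilon$; this is where the heavy blocks at $2i\pm1$ are needed. Any $\epsilon$-approximate split therefore reveals $z_i$, and the memory is $\Omega(n)=\Omega(1/\epsilon)$.

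\textbf{Main obstacle.} The delicate step is balancing the constants. Alice's other $n-1$ points, with arbitrary labels, perturb the side means $\mu$ and $\gamma$. The heavy blocks must dominate these perturbations so that the sign of the gap depends only on $z_i$. At the same time the blocks must keep the total $m=O(1/\epsilon)$, so that a constant SSE gap still translates to more than $\epsilon$ in normalized loss. My first attempt is to take all block sizes to be constant multiples of $n$, with $B=Cn$ on the outer blocks and $B'=C'n$ at $2i\pm1$. Lemma~\ref{lem:add-block} then gives explicit rational expressions that I would bound directly. I would also verify that splits separating either heavy block from its side pay $\Omega(n)\cdot\Omega(1)/m=\Omega(1)\gg\epsilon$.
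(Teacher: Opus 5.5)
Your proposal is correct and follows the paper's proof: rescale to $M=1$, reduce from \textsc{Index} with Alice's labelled points at the even positions $2k$ and Bob's opposite-label anchor blocks at $2i-1$ and $2i+1$, so that only the splits $2i-1$ and $2i$ are competitive, and read off $z_i$ from the sign of their loss gap via Lemma~\ref{lem:add-block}. The differences are cosmetic and your variant is equally valid: the paper takes $r=n$ with anchors $T=100n^2$ (so $m=201n^2$ and an SSE gap of order $n$), whereas your $r=1$ with anchors $\Theta(n)$ gives $m=\Theta(n)$ and an SSE gap of order $1$, and your outer blocks $G_L,G_R$ are unnecessary; when you fix constants, note that far splits must beat $\opt\le rn/m=\Theta(1)$ by more than $\epsilon$, not merely exceed $\epsilon$, which an anchor multiple such as $C'=10$ ensures.
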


\begin{proof}
We first remove $M$ by scaling.
Given labels $y_t\in[0,M]$, define $\tilde y_t := y_t/M\in[0,1]$.
For every split $j$,
\[
L_{\mathrm{MSE}}(j;\, y)=M^2\, L_{\mathrm{MSE}}(j;\, \tilde y).
\]
Therefore, an $\epsilon$-additive algorithm for labels in $[0,M]$
would give an $(\epsilon/M^2)$-additive algorithm for labels in $[0,1]$.
So it suffices to prove the theorem for $M=1$ with target additive error:
\[
\epsilon' := \epsilon/M^2.
\]
We may assume $\epsilon'\le 10^{-3}$.

We reduce from \textsc{Index}. Alice encodes her input as a data stream and runs the algorithm on it, then sends the algorithm's memory state to Bob. Bob appends his part of the stream and uses the algorithm's output to determine whether the queried bit is $0$ or $1$. Let
\[
n:=\left\lfloor \frac{1}{1000\epsilon'}\right\rfloor,\qquad
B:=n,\qquad
T:=100n^2,\qquad
N:=2n+1.
\]
Alice and Bob will build a stream over labels in $\{0,1\}$.

\textbf{Alice's prefix.}
Given $z\in\{0,1\}^n$, for each $k\in[n]$, Alice inserts $B$ copies of
\[
(x,y)=(2k,z_k).
\]
Thus Alice contributes $m_A=nB=n^2$ points.

\textbf{Bob's suffix.}
Given $i\in[n]$, Bob appends
\[
T\text{ copies of }(2i-1,0)
\qquad\text{and}\qquad
T\text{ copies of }(2i+1,1).
\]
The total stream length is
\[
m=m_A+2T=n^2+200n^2=201n^2.
\]

\begin{figure}[h]
\centering
\begin{tikzpicture}[scale=1.2]
  \draw[->,thick] (0.5,0) -- (9.8,0) node[right] {$x$};
  \foreach \x in {1,...,9}{
    \draw (\x,0.06) -- (\x,-0.06);
    \node[below=3pt,font=\small] at (\x,0) {$\x$};
  }

  \foreach \x/\lbl in {2/0, 4/1, 6/1, 8/0}{
    \node[draw=blue!70!black, fill=blue!10, minimum width=0.55cm, minimum height=0.38cm,
          font=\small, text=blue!70!black] at (\x, 0.55) {$B$};
    \node[above=2pt, font=\small, blue!70!black] at (\x, 0.74) {$\lbl$};
  }
  \node[left,font=\footnotesize,blue!70!black] at (0.7,0.55) {Alice};

  \node[draw=red!70!black, fill=red!10, minimum width=0.55cm, minimum height=0.38cm,
        font=\small, text=red!70!black] at (3,-0.65) {$T$};
  \node[below=2pt, font=\small, red!70!black] at (3,-0.74) {$0$};

  \node[draw=red!70!black, fill=red!10, minimum width=0.55cm, minimum height=0.38cm,
        font=\small, text=red!70!black] at (5,-0.65) {$T$};
  \node[below=2pt, font=\small, red!70!black] at (5,-0.74) {$1$};

  \node[left,font=\footnotesize,red!70!black] at (0.7,-0.55) {Bob};

  \draw[dashed,gray,thick] (3,1) -- (3,1.3)
    node[above,font=\small,black] {$j^-$};
  \draw[dashed,orange!80!black,thick] (4,1) -- (4,1.3)
    node[above,font=\small,black] {$j^+$};
\end{tikzpicture}
\caption{Construction for $n=4$, $z=(0,1,1,0)$, and $i=2$.
Blue (above): Alice's $B$ copies of each of $(2k,z_k)$.
Red (below): Bob's $T$ copies of $(3,0)$ and $(5,1)$.
Dashed lines mark the two candidate splits $j^-=3$ and $j^+=4$.}
\label{fig:lb-construction}
\end{figure}

Consider the following two splits:
\[
j^-:=2i-1,\qquad j^+:=2i.
\]
These are the two adjacent splits around the block at $x=2i$.

We first claim that any split that is not $j^-$ or $j^+$ has high squared error.

\begin{claim}
For every $j\notin\{j^-,j^+\}$, we have
\[
L_{\mathrm{MSE}}(j)\ge \frac{50}{201}.
\]
\end{claim}

\begin{proof}
Fix $j\notin\{j^-,j^+\}$. If $j\le 2i-2$, then both anchor blocks, namely the $T$ copies of $(2i-1,0)$
and the $T$ copies of $(2i+1,1)$, lie on the right side of the split.
Let $c$ be the optimal constant on the right side, i.e., the average label on that side.
Then the contribution of the anchor blocks alone is
\[
T(0-c)^2+T(1-c)^2.
\]
This is minimized at $c=\tfrac12$, so
\[
T(0-c)^2+T(1-c)^2 \ge T\Bigl(0-\tfrac12\Bigr)^2 + T\Bigl(1-\tfrac12\Bigr)^2 = \frac{T}{2}.
\]

If $j\ge 2i+1$, then both anchor blocks lie on the left side of the split,
and the same argument shows that the anchor contribution on the left is at least $T/2$.

Thus, for every $j\notin\{j^-,j^+\}$, the total squared error is at least $T/2$.
Since $m=201n^2$ and $T=100n^2$, we obtain
\[
L_{\mathrm{MSE}}(j)\ge \frac{T/2}{m} = \frac{100n^2/2}{201n^2} = \frac{50}{201}. \qedhere
\]
\end{proof}

We now show that both $j^-$ and $j^+$ have much lower squared error. Hence, the optimal split must be among them.

\begin{claim}
For $j\in\{j^-,j^+\}$, we have $L_{\mathrm{MSE}}(j)\le \frac{1}{201}$.
\end{claim}
\begin{proof}
Recall that $j^-$ splits the feature values into intervals $[1,2i-1]$ and $[2i,N]$; on the other hand, $j^+$ splits the feature values into intervals $[1,2i]$ and $[2i+1,N]$.

In both cases, all $T$ zeros at $x=2i-1$ lie on the left of $j$ and all $T$ ones at $x=2i+1$ lie on the right.
Using constant $0$ on the left and constant $1$ on the right, both anchor blocks incur zero error.
Only Alice's points contribute, and each contributes at most $1$, so
\[
L_{\mathrm{MSE}}(j)\le \frac{m_A}{m} = \frac{n^2}{201n^2} = \frac1{201}.\qedhere
\]
\end{proof}
Since $\frac1{201}<\frac{50}{201}$, we must have
\[
\opt=\min\{L_{\mathrm{MSE}}(j^-),L_{\mathrm{MSE}}(j^+)\}.
\]

{\bf The better split between $j^-$ and $j^+$ reveals $z_i$.}
Let $v:=z_i\in\{0,1\}$.
Write
\[
D_L:=\{t:x_t<2i\},\qquad D_R:=\{t:x_t>2i\},
\]
and let
\[
a:=|D_L|,\quad \mu:=\text{mean label on }D_L,
\qquad
b:=|D_R|,\quad \gamma:=\text{mean label on }D_R.
\]

Let $G$ be the block of $B$ copies of label $v$ at $x=2i$. Under $j^+$, the block $G$ is placed on the left, and under $j^-$, it is placed on the right.

Because $D_L$ contains exactly $T=100n^2$ zeros from Bob and at most $n^2$ ones from Alice,
\[
\mu\le \frac{n^2}{100n^2+n^2}=\frac{1}{101}<\frac1{100}.
\]
Similarly, $D_R$ contains exactly $100n^2$ ones from Bob and at most $n^2$ zeros from Alice, so
\[
\gamma\ge \frac{100n^2}{100n^2+n^2}=\frac{100}{101}>\frac{99}{100}.
\]

We note that the function $\frac{a}{a+B}$ is increasing with respect to $a$. Since $a,b\ge T=100n^2$ and $B=n$, we have
\[
\frac{a}{a+B}\ge \frac{T}{T+B} =\frac{100n^2}{100n^2+n} >\frac12,
\]

Similarly,
\[
\frac{b}{b+B}>\frac12.
\]

Under the split $j^+$, the block $G$ is appended to the left side $D_L$; under the split $j^-$, it is appended to the right side $D_R$.
Applying Lemma~\ref{lem:add-block} to each case,
\begin{align*}
L_{\mathrm{MSE}}(j^+) &= \frac{1}{m}\left[\SSE(D_L)+\frac{aB}{a+B}(\mu-v)^2 + \SSE(D_R)\right],\\
L_{\mathrm{MSE}}(j^-) &= \frac{1}{m}\left[\SSE(D_L) + \SSE(D_R)+\frac{bB}{b+B}(\gamma-v)^2\right].
\end{align*}
This gives
\[
L_{\mathrm{MSE}}(j^+)-L_{\mathrm{MSE}}(j^-)
=
\frac{B}{m}\left[
\frac{a}{a+B}(\mu-v)^2
-
\frac{b}{b+B}(\gamma-v)^2
\right].
\]

{\bf Case 1:} If $v=1$, then
  \[
    (\mu-1)^2\ge (99/100)^2,
    \qquad
    (\gamma-1)^2\le (1/100)^2.
  \]
  Hence,
  \[
    L_{\mathrm{MSE}}(j^+)-L_{\mathrm{MSE}}(j^-) 
    \ge 
      \frac{B}{m}\left[\frac12\Bigl(\frac{99}{100}\Bigr)^2-\Bigl(\frac1{100}\Bigr)^2\right]
    > 
      \frac{B}{4m} 
    = 
      \frac{1}{804\,n} \implies L_{\mathrm{MSE}}(j^-)<L_{\mathrm{MSE}}(j^+).
  \]

{\bf Case 2:} If $v=0$, then
  \[
    \mu^2\le (1/100)^2,
    \qquad
    \gamma^2\ge (99/100)^2.
  \]
  Hence,
  \[
    L_{\mathrm{MSE}}(j^+)-L_{\mathrm{MSE}}(j^-)
    \le
    \frac{B}{m}\left[\Bigl(\frac1{100}\Bigr)^2-\frac12\Bigl(\frac{99}{100}\Bigr)^2\right]
    <
    -\frac{B}{4m}
    =
    -\frac{1}{804\,n} \implies L_{\mathrm{MSE}}(j^+)<L_{\mathrm{MSE}}(j^-).
  \]

In both cases,
\[
|L_{\mathrm{MSE}}(j^+)-L_{\mathrm{MSE}}(j^-)|>\frac{1}{804\,n}.
\]
Since
\[
n=\left\lfloor \frac{1}{1000\epsilon'}\right\rfloor,
\qquad\text{hence}\qquad
\frac1n\ge 1000\epsilon',
\]
we get
\[
|L_{\mathrm{MSE}}(j^+)-L_{\mathrm{MSE}}(j^-)|>\epsilon'.
\]
Therefore any algorithm that outputs $\widehat j$ with
\[
L_{\mathrm{MSE}}(\widehat j)\le \opt+\epsilon'
\]
must return the correct minimizer in $\{j^-,j^+\}$, and hence reveals $z_i$.

So Alice can run the streaming algorithm on her prefix, send its memory state to Bob,
and Bob can recover $z_i$ with probability at least $2/3$.
This gives a one-way protocol for \textsc{Index} with communication equal to the memory used.
By Lemma~\ref{lem:index}, that memory must be $\Omega(n)$ bits. Finally,
\[
n=\Theta(1/\epsilon')=\Theta(M^2/\epsilon),
\]
so the space lower bound is $\Omega(M^2/\epsilon)$.
\end{proof}

\subsection{Lower bound for classification}\label{subsec:lb-classification}

For classification with numerical observations, we will show the following.
Fix $\epsilon\in(0,10^{-3})$.
Any one-pass randomized streaming algorithm that, with probability at least $2/3$,
outputs a split $\widehat{j}\in[N]$ satisfying
$L_{\rm mis}(\widehat{j})\le \opt+\epsilon$
must use $\Omega(1/\epsilon)$ bits of memory, even for instances with $N=\Theta(1/\epsilon)$.

For an interval $R\subseteq[N]$, let
\[
  f_{+1,R} = |\{ i : x_i \in R,\; y_i = +1 \}|, \qquad
  f_{-1,R} = |\{ i : x_i \in R,\; y_i = -1 \}|
\]
denote the number of $+1$ and $-1$ labels falling in $R$, respectively.

\begin{theorem}\label{thm:lb-misclassification}
Fix $\epsilon\in(0,10^{-3})$.
Any one-pass randomized streaming algorithm that, with probability at least $2/3$,
outputs a split $\widehat{j}\in[N]$ satisfying
\[
L_{\rm mis}(\widehat{j})\le \opt+\epsilon
\]
must use $\Omega(1/\epsilon)$ bits of memory, even for instances with $N=\Theta(1/\epsilon)$.
\end{theorem}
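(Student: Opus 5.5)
We reduce from \textsc{Index} (Lemma~\ref{lem:index}), using the same template as Theorem~\ref{thm:lb-regression}: Alice encodes her bits as blocks at even feature values, and Bob appends heavy ``anchor'' blocks around position $2i$. This forces the optimal split to be one of the two adjacent splits $j^-=2i-1$ and $j^+=2i$, and which one is better reveals $z_i$. Let $n:=\lfloor 1/(c\epsilon)\rfloor$ for a suitable constant $c$ (say $c=1000$), $B:=n$ and $N:=2n+1$. For each $k\in[n]$, Alice inserts $B$ copies of $(2k,\,2z_k-1)$, so $z_k=1$ gives label $+1$ and $z_k=0$ gives $-1$. Alice contributes $m_A=n^2$ points. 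Bob inserts $T$ copies of $(2i-1,-1)$ and $T$ copies of $(2i+1,+1)$, with $T=\Theta(n^2)$ chosen so that $T>m_A$ with a margin, e.g.\ $T=10n^2$. The total stream length is then $m=n^2+2T=21n^2$.

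\textbf{Step 1: all other splits are bad.} Take a split $j\le 2i-2$. Both anchors lie on the right, and that side contains $T$ labels of each sign plus at most $n^2$ of Alice's points. Its minority count is therefore at least $T-n^2$, which gives $L_{\rm mis}(j)\ge (T-n^2)/m=9/21$. The case $j\ge 2i+1$ is symmetric.

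\textbf{Step 2: $j^\pm$ are good, and the counts are explicit.} For $j\in\{j^-,j^+\}$, the left side contains the $T$ anchor $-1$'s plus at most $n^2$ of Alice's points, so its majority is $-1$. Likewise the right side has majority $+1$. The misclassification count is therefore exactly the number of $+1$'s on the left plus the number of $-1$'s on the right. This is at most $n^2$, so $L_{\rm mis}(j)\le 1/21$, and $\opt$ is attained at $j^-$ or $j^+$.

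\textbf{Step 3: the gap.} Moving from $j^-$ to $j^+$ changes only which side contains the block $G$ of $B$ copies of label $2z_i-1$ at $x=2i$. The majorities stay fixed, since each side keeps its dominant anchor. If $z_i=1$, then $G$ consists of $+1$'s: they are misclassified on the left (under $j^+$) and correct on the right (under $j^-$). Hence $L_{\rm mis}(j^+)-L_{\rm mis}(j^-)=B/m$. If $z_i=0$, the sign flips. So $|L_{\rm mis}(j^+)-L_{\rm mis}(j^-)|=B/m=1/(21n)\ge c\epsilon/21>\epsilon$ once $c>21$.

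Any algorithm achieving $\opt+\epsilon$ must therefore output the correct split among $j^\pm$ (the other splits are worse by more than $\epsilon$, as $9/21-1/21\gg\epsilon$). Bob reads off $z_i$ from which split is output, giving a one-way protocol whose communication is the algorithm's memory. By Lemma~\ref{lem:index}, this memory is $\Omega(n)=\Omega(1/\epsilon)$ with $N=2n+1=\Theta(1/\epsilon)$. The only delicate point is Step 2: the majorities on each side must be fixed regardless of $z$ and of where $G$ goes, which is why $T$ must dominate $m_A+B$. Once that holds, the loss is linear in the counts and the gap computation is exact, unlike the quadratic analysis needed for regression.
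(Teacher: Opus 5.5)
Your proposal is correct and follows essentially the same route as the paper: the same \textsc{Index} reduction with Alice's $B=n$ blocks at even positions, Bob's two anchor blocks of size $T=\Theta(n^2)$ at $2i\pm 1$, the far-split/near-split dichotomy, and the exact gap $B/m$ obtained by moving the middle block. The only differences are constants and one bound. You use $T=10n^2$ and $c=1000$, while the paper uses $T=4n^2$ and $n=\lfloor 1/(100\epsilon)\rfloor$. For far splits you use the weaker but still valid bound $T-n^2$, whereas the paper notes the minority count is already at least $T$.
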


\begin{proof}
We again reduce from \textsc{Index} (Lemma~\ref{lem:index}).
\[
n=\left\lfloor\frac{1}{100\epsilon}\right\rfloor,\qquad
N=2n+1,\qquad
B=n,\qquad
T=4n^2.
\]
We will construct a stream over $x\in[N]$ with labels in $\{-1,+1\}$.

\textbf{Alice's prefix.}
Given $z\in\{0,1\}^n$, for each $k\in[n]$ Alice inserts exactly $B$ copies of
\[
(x,y)=(2k,\; s_k),\qquad\text{where } s_k:=\begin{cases}+1 &\text{if } z_k=1,\\ -1 &\text{if } z_k=0.\end{cases}
\]
Thus the number of Alice points is $m_A=nB=n^2$.
Alice runs the one-pass algorithm~$\mathcal{A}$ on this prefix and sends its memory state ($s$ bits) to Bob.

\textbf{Bob's suffix.}
Given index $i\in[n]$, Bob appends
\[
T\ \text{copies of }(x,y)=(2i-1,\,-1)
\qquad\text{and}\qquad
T\ \text{copies of }(x,y)=(2i+1,\,+1).
\]
Let the full stream be $S=S_A(z)\circ S_B(i)$, whose total length is
\[
m=m_A+2T=n^2+8n^2=9n^2.
\]
Define two adjacent split positions
\[
j^-:=2i-1,\qquad j^+:=2i.
\]
Note that $j^-$ places the $x=2i$ block on the right, while $j^+$ places it on the left.

\begin{figure}[t]
\centering
\begin{tikzpicture}[x=1.3cm,y=1cm]
	\draw[->,thick] (0,0) -- (6,0) node[right] {$x$};
	\draw (1,0.08) -- (1,-0.08) node[below=4pt] {$2i\!-\!1$};
	\draw (3,0.08) -- (3,-0.08) node[below=4pt] {$2i$};
	\draw (5,0.08) -- (5,-0.08) node[below=4pt] {$2i\!+\!1$};

	\node[draw=black,fill=white,align=center,font=\scriptsize,text width=1.5cm,minimum height=0.75cm] at (1,0.85) {$T$ copies\\$(y=-1)$};
	\node[draw=black,fill=white,align=center,font=\scriptsize,text width=1.8cm,minimum height=0.75cm] at (3,0.85) {$G$: $B$ copies\\$(y=s_i)$};
	\node[draw=black,fill=white,align=center,font=\scriptsize,text width=1.5cm,minimum height=0.75cm] at (5,0.85) {$T$ copies\\$(y=+1)$};

	\draw[dashed,thick] (1,-0.2) -- (1,0.45);
	\draw[dashed,thick] (1,1.25) -- (1,1.5) node[above] {$j^-\!=\!2i\!-\!1$};
	\draw[dashed,thick] (3,-0.2) -- (3,0.45);
	\draw[dashed,thick] (3,1.25) -- (3,1.5) node[above] {$j^+\!=\!2i$};
\end{tikzpicture}
\caption{Bob contributes two large anchor blocks at $x=2i-1$ (label $-1$) and $x=2i+1$ (label $+1$),
while Alice contributes the middle block~$G$ at $x=2i$ with label $s_i\in\{-1,+1\}$.
The competitive splits are the two adjacent candidates $j^-$ and $j^+$;
knowing the better split reveals the hidden bit~$z_i$.}
\label{fig:lb-classification-index-diagram}
\end{figure}

\begin{claim}\label{claim:far-splits-classification}
For every $j\notin\{j^-,j^+\}$, we have $L_{\rm mis}(j)\ge 4/9$.
\end{claim}

\begin{proof}
Consider any $j\le 2i-2$.
Then both anchor blocks $(2i-1,-1)$ and $(2i+1,+1)$ lie on the right side $[j+1,N]$. Hence
\[
f_{-1,[j+1,N]}\ge T,\qquad f_{+1,[j+1,N]}\ge T,
\]
so $\min\{f_{-1,[j+1,N]},f_{+1,[j+1,N]}\}\ge T$ and therefore
\[
L_{\rm mis}(j)\ \ge\ \frac{T}{m}\ =\ \frac{4n^2}{9n^2}\ =\ \frac{4}{9}.
\]
A symmetric argument applies to any $j\ge 2i+1$: then both anchors lie on the left side $[1,j]$,
giving $\min\{f_{-1,[1,j]},f_{+1,[1,j]}\}\ge T$ and again $L_{\rm mis}(j)\ge 4/9$.
\end{proof}

\begin{claim}\label{claim:middle-splits-classification}
For each $j\in\{j^-,j^+\}$, we have $L_{\rm mis}(j)\le 1/9$.
Moreover, the left majority label is $-1$ and the right majority label is $+1$.
Consequently,
\[
\opt=\min\{L_{\rm mis}(j^-),L_{\rm mis}(j^+)\}.
\]
\end{claim}

\begin{proof}
Fix $j\in\{j^-,j^+\}$.
Then all $T$ copies of $(2i-1,-1)$ lie on the left side, and all $T$ copies of $(2i+1,+1)$
lie on the right side.

Since $T=4n^2>n^2=m_A$, the left side contains more $-1$ labels from Bob than the total number
of Alice points, so the left majority label is $-1$.
Similarly, the right majority label is $+1$.

Thus, under the optimal majority vote on each side, all Bob points are classified correctly.
Only Alice's points can be misclassified, and there are exactly $m_A=n^2$ such points.
Therefore
\[
L_{\rm mis}(j)\le \frac{m_A}{m}=\frac{n^2}{9n^2}=\frac19.
\]

By Claim~\ref{claim:far-splits-classification}, every $j\notin\{j^-,j^+\}$ satisfies
\[
L_{\rm mis}(j)\ge \frac49>\frac19,
\]
so the optimum is attained at one of $j^-,j^+$.
\end{proof}

\begin{claim}\label{claim:gap-classification}
$|L_{\rm mis}(j^-)-L_{\rm mis}(j^+)|>\epsilon$, and $\arg\min\{L_{\rm mis}(j^-),L_{\rm mis}(j^+)\}$ reveals $z_i$.
\end{claim}

\begin{proof}
Let $G$ be the block at $x=2i$, consisting of $B=n$ copies of label $s_i\in\{-1,+1\}$
(where $s_i=+1$ iff $z_i=1$, and $s_i=-1$ iff $z_i=0$).
All points other than $G$ lie on the same side under both $j^-$ and $j^+$, and
Claim~\ref{claim:middle-splits-classification} shows the side-majorities remain $-1$ on the left
and $+1$ on the right.
Thus the only change in misclassification count comes from moving $G$ from right (under $j^-$)
to left (under $j^+$).

{\bf Case 1: $z_i=1$} (so $s_i=+1$).
Under $j^-$, $G$ lies on the right (majority $+1$), contributing $0$ to $f_{-1,(j^-,N]}$.
Under $j^+$, $G$ lies on the left (majority $-1$), contributing $B$ to $f_{+1,[1,j^+]}$.
Hence $L_{\rm mis}(j^+)=L_{\rm mis}(j^-)+B/m$.

{\bf Case 2: $z_i=0$} (so $s_i=-1$).
Under $j^-$, $G$ lies on the right (majority $+1$), contributing $B$ to $f_{-1,(j^-,N]}$.
Under $j^+$, $G$ lies on the left (majority $-1$), contributing $0$ to $f_{+1,[1,j^+]}$.
Hence $L_{\rm mis}(j^-)=L_{\rm mis}(j^+)+B/m$.

In both cases $|L_{\rm mis}(j^-)-L_{\rm mis}(j^+)|=B/m=1/(9n)\ge 10\epsilon>\epsilon$, and the minimizer reveals $z_i$.
\end{proof}

Since $\opt=\min\{L_{\rm mis}(j^-),L_{\rm mis}(j^+)\}$ and $|L_{\rm mis}(j^-)-L_{\rm mis}(j^+)|>\epsilon$,
any algorithm outputting $\widehat{j}$ with $L_{\rm mis}(\widehat{j})\le \opt+\epsilon$ must output the correct
minimizer in $\{j^-,j^+\}$ with probability at least~$2/3$, enabling Bob to recover $z_i$.
Thus, the memory state ($s$ bits) yields a one-way protocol for \textsc{Index} with communication~$s$,
so by Lemma~\ref{lem:index}, $s=\Omega(n)=\Omega(1/\epsilon)$.
\end{proof}

\subsection{Lower bound for classification under Gini impurity}\label{subsec:lb-gini}

Using the re-encoding trick, we can easily prove a space lower bound for the loss based on the Gini impurity.

\begin{theorem}\label{thm:lb-gini}
Fix $\epsilon\in(0,10^{-3})$.
Any one-pass randomized streaming algorithm that, with probability at least $2/3$,
outputs a split $\widehat{j}\in[N]$ satisfying
\[
L_{\mathrm{Gini}}(\widehat{j})\le \opt+\epsilon
\]
must use $\Omega(1/\epsilon)$ bits of memory, even for instances with $N=\Theta(1/\epsilon)$.
\end{theorem}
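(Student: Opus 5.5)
We derive this from the regression lower bound via the re-encoding identity $L_{\mathrm{Gini}}(j)=2L_{\mathrm{MSE}}(j)$ established in the proof of Lemma~\ref{lem:gini-shift}. The reduction goes as follows. Suppose $\mathcal{A}$ is a one-pass algorithm that outputs an $\epsilon$-additive Gini split using $s$ bits. Consider any regression stream whose labels lie in $\{0,1\}$. Map each $(x,y)$ on the fly to $(x,2y-1)\in[N]\times\{-1,+1\}$ and feed the result to $\mathcal{A}$; this costs no extra memory. Under this map $z_i=(y_i'+1)/2=y_i$, so the identity gives $L_{\mathrm{Gini}}(j)=2L_{\mathrm{MSE}}(j)$ for every split $j$, and hence $\OPT_{\mathrm{Gini}}=2\OPT_{\mathrm{MSE}}$. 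If $\mathcal{A}$ outputs $\widehat j$ with $L_{\mathrm{Gini}}(\widehat j)\le \OPT_{\mathrm{Gini}}+\epsilon$, then $L_{\mathrm{MSE}}(\widehat j)\le\OPT_{\mathrm{MSE}}+\epsilon/2$.

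The next step is to check that the hard instances in the proof of Theorem~\ref{thm:lb-regression} (with $M=1$) use only labels in $\{0,1\}$, which they do: Alice inserts $z_k\in\{0,1\}$, and Bob inserts $0$ and $1$. So the Index reduction there goes through verbatim with $\epsilon'=\epsilon/2$. In that proof the gap between $j^-$ and $j^+$ is $>1/(804n)$, and $n=\lfloor 1/(1000\epsilon')\rfloor$ gives $1/(804n)>\epsilon'$ as long as $\epsilon'\le 10^{-3}$, which holds since $\epsilon<10^{-3}$. Hence $\mathcal{A}$ would solve \textsc{Index} on $n=\Theta(1/\epsilon)$ bits, and Lemma~\ref{lem:index} gives $s=\Omega(1/\epsilon)$. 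In that instance $N=2n+1=\Theta(1/\epsilon)$, which matches the statement.

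The only point needing care is the identity itself, together with the convention $\operatorname{Gini}(\emptyset)=0$ matching $\SSE(\emptyset)=0$. I would verify it directly. On a side with $a$ labels equal to $+1$ and $b$ labels equal to $-1$, the re-encoded labels have mean $a/(a+b)$ and $\SSE=ab/(a+b)$. Dividing by $m$, the Gini contribution $2ab/(m(a+b))$ is exactly twice the MSE contribution. Beyond that, the argument is bookkeeping of constants, so I expect no real obstacle. If one preferred a self-contained argument, the same two-anchor construction as in Theorem~\ref{thm:lb-misclassification} could be analyzed directly with Lemma~\ref{lem:add-block}, but the reduction is cleaner.
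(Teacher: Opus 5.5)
Your proposal is correct and follows the paper's proof. Both arguments use the re-encoding identity $L_{\mathrm{Gini}}(j)=2L_{\mathrm{MSE}}(j)$ to turn an $\epsilon$-additive Gini algorithm into an $(\epsilon/2)$-additive regression algorithm with $M=1$, and then invoke Theorem~\ref{thm:lb-regression}; your explicit check that the hard instances of that theorem use only labels in $\{0,1\}$, so that a Gini algorithm can be run on them at all, is a step the paper leaves implicit and is what makes the reduction valid.
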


\bibliography{lipics-v2021-sample-article}

\appendix

\section{Omitted Proofs} \label{sec:omitted}

\begin{proof}[Proof of Lemma \ref{lem:split-shift}]
Let
\[
A:=\{i:x_i\le j\},\qquad
B:=\{i:j<x_i\le j'\},\qquad
C:=\{i:x_i>j'\}.
\]

Then $|B|=b$. Thus split $j$ corresponds to the partition
\[
(A,\;B\cup C),
\]
while split $j'$ corresponds to
\[
(A\cup B,\;C).
\]

Let $\alpha$ and $\beta$ be values (one can show that $\alpha = \mu(j)$ and $\beta=\gamma(j)$) such that
\[
L_{\mathrm{MSE}}(j)=\frac1m\left(\sum_{i\in A}(y_i-\alpha)^2+\sum_{i\in B\cup C}(y_i-\beta)^2\right).
\]
Since $L_{\mathrm{MSE}}(j')$ is the minimum possible loss for split $j'$, we may upper bound it by using the
same constants $\alpha$ and $\beta$:
\[
L_{\mathrm{MSE}}(j')
\le
\frac1m\left(\sum_{i\in A\cup B}(y_i-\alpha)^2+\sum_{i\in C}(y_i-\beta)^2\right).
\]
Subtracting the two expressions gives
\[
L_{\mathrm{MSE}}(j')-L_{\mathrm{MSE}}(j)
\le
\frac1m\sum_{i\in B}\Big((y_i-\alpha)^2-(y_i-\beta)^2\Big).
\]
Because $y_i,\alpha,\beta\in[0,M]$, each term lies in $[-M^2,M^2]$, so each summand is at most
$M^2$. Hence
\[
L_{\mathrm{MSE}}(j')-L_{\mathrm{MSE}}(j)\le \frac{bM^2}{m}.
\]

Now reverse the roles of $j$ and $j'$. Let $(\alpha',\beta')$ be an optimal pair for split $j'$. Using
$(\alpha',\beta')$ as a feasible choice for split $j$, the same argument gives
\[
L_{\mathrm{MSE}}(j)-L_{\mathrm{MSE}}(j')\le \frac{bM^2}{m}.
\]

Combining the two bounds,
\[
|L_{\mathrm{MSE}}(j')-L_{\mathrm{MSE}}(j)|\le \frac{bM^2}{m}.
\]
\end{proof}

\begin{proof}[Proof of Lemma \ref{lem:heavy-hit}]
If $K\ge m$, then the reservoir contains all stream items, so the claim is trivial.

Assume $K<m$. Fix an interval $(a,b]\subseteq[N]$, and let $t$ be the number of stream items
whose feature value lies in $(a,b]$. The final reservoir
is a uniformly random $K$-subset of the $m$ stream items. Hence
\begin{align*}
\Pr[\text{the reservoir misses all data points in }(a,b]]
&= \frac{\binom{m-t}{K}}{\binom{m}{K}}
= \prod_{i=0}^{K-1}\frac{m-t-i}{m-i} \\
&\le \prod_{i=0}^{K-1}\left(1-\frac{t}{m}\right)
= \left(1-\frac{t}{m}\right)^K,
\end{align*}
where each factor uses $\frac{m-t-i}{m-i} = 1-\frac{t}{m-i}\le 1-\frac{t}{m}$ since $m-i\le m$.
If $t>\tau m$, then
\[
\Pr[\text{the reservoir misses all data points in }(a,b]]
\le
(1-\tau)^K
\le
e^{-K\tau}.
\]
Since $K=\lceil C\log N/\tau\rceil$, this is at most $N^{-C}$.

There are at most $N^2$ intervals of the form $(a,b]\subseteq[N]$. By a union bound,
the probability that some interval containing more than $\tau m$ points is missed by the reservoir
is at most
\[
N^2\cdot N^{-C}.
\]
Choosing $C\ge 4$ makes this at most $1/N^2$.
\end{proof}

\begin{proof}[Proof of Lemma \ref{lem:mean-min}]
Write $(y_t-a)=(y_t-\bar y)+(\bar y-a)$ and expand:
\[
\sum_{t=1}^k (y_t-a)^2
=\sum_{t=1}^k (y_t-\bar y)^2
+2(\bar y-a)\sum_{t=1}^k (y_t-\bar y)
+k(\bar y-a)^2.
\]
The cross term vanishes since $\sum_{t=1}^k(y_t-\bar y)=0$, giving the identity. The inequality follows from $k(a-\bar y)^2\ge 0$.
\end{proof}

\begin{proof}[Proof of Lemma~\ref{lem:gini-shift}]
For a fixed split $j$, let $L := [1,j]$ and $R := [j+1,N]$. Re-encode each label $y_i\in\{-1,+1\}$ as $z_i=(y_i+1)/2\in\{0,1\}$, and define
\[
L_{\mathrm{MSE}}(j) := \frac{1}{m}\!\left(\sum_{i:\,x_i\le j}(z_i-\bar z_L)^2 + \sum_{i:\,x_i>j}(z_i-\bar z_R)^2\right),
\]
where $\bar z_L$ and $\bar z_R$ are the means of the $z_i$'s on the left and right sides, respectively.

Let
\[
n_L := f_{+1,L}+f_{-1,L}, \qquad n_R := f_{+1,R}+f_{-1,R}.
\]
If $n_L>0$, then $\bar z_L = f_{+1,L}/n_L$, and
\begin{align*}
	\sum_{i:\,x_i\le j}(z_i-\bar z_L)^2
	&= f_{+1,L}(1-\bar z_L)^2 + f_{-1,L}(0-\bar z_L)^2 \\
	&= f_{+1,L}\left(\frac{f_{-1,L}}{n_L}\right)^2 + f_{-1,L}\left(\frac{f_{+1,L}}{n_L}\right)^2 \\
	&= \frac{f_{+1,L}f_{-1,L}^2 + f_{-1,L}f_{+1,L}^2}{n_L^2} \\
	&= \frac{f_{+1,L}f_{-1,L}(f_{+1,L}+f_{-1,L})}{n_L^2} = \frac{f_{+1,L}f_{-1,L}}{n_L}.
\end{align*}
If $n_L=0$, then the left side is empty and this contribution is $0$. Likewise, the right-side contribution equals $f_{+1,R}f_{-1,R}/n_R$ when $n_R>0$, and $0$ when $n_R=0$.

Hence
\[
2L_{\mathrm{MSE}}(j)
=
\frac{2}{m}\left(\frac{f_{+1,L}f_{-1,L}}{n_L}+\frac{f_{+1,R}f_{-1,R}}{n_R}\right)
=
L_{\mathrm{Gini}}(j),
\]
where empty-side terms are interpreted as $0$. 
Since $z_i\in[0,1]$ (i.e.\ $M=1$), Lemma~\ref{lem:split-shift} gives $|L_{\mathrm{MSE}}(j')-L_{\mathrm{MSE}}(j)|\le \ell/m$, and therefore
\[
\bigl|L_{\mathrm{Gini}}(j')-L_{\mathrm{Gini}}(j)\bigr| = 2\bigl|L_{\mathrm{MSE}}(j')-L_{\mathrm{MSE}}(j)\bigr| \le \frac{2\ell}{m}.\qedhere
\]
\end{proof}

\begin{proof}[Proof of Lemma~\ref{lem:add-block}]
Let $\mu' := \frac{a\mu+Bv}{a+B}$ be the mean of $D\cup G$. Apply Lemma~\ref{lem:mean-min} to $D$,
\[
\sum_{y\in D}(y-\mu')^2 = \SSE(D) + a(\mu-\mu')^2.
\]
Since every element of $G$ equals $v$, we have $\sum_{y\in G}(y-\mu')^2=B(v-\mu')^2$. Summing,
\[
\SSE(D\cup G)
=\sum_{y\in D\cup G}(y-\mu')^2
=\SSE(D)+a(\mu-\mu')^2 + B(v-\mu')^2.
\]
It remains to show $a(\mu-\mu')^2+B(v-\mu')^2=\frac{aB}{a+B}(\mu-v)^2$. Substituting $\mu-\mu'=\frac{B(\mu-v)}{a+B}$ and $v-\mu'=\frac{a(v-\mu)}{a+B}$,
\begin{align*}
a(\mu-\mu')^2+B(v-\mu')^2
= \frac{aB^2(\mu-v)^2}{(a+B)^2}+\frac{Ba^2(\mu-v)^2}{(a+B)^2}
& = \frac{aB(a+B)(\mu-v)^2}{(a+B)^2} \\
& = \frac{aB}{a+B}(\mu-v)^2. \qedhere
\end{align*}
\end{proof}

\begin{proof}[Proof of Theorem \ref{thm:lb-gini}]
Re-encode each label $y_i\in\{-1,+1\}$ as $z_i=(y_i+1)/2\in\{0,1\}$. By the identity established in the proof of Lemma~\ref{lem:gini-shift}, we have
\[
L_{\mathrm{Gini}}(j)=2\,L_{\mathrm{MSE}}(j)
\]
for every split $j$, where $L_{\mathrm{MSE}}$ is the squared loss on the re-encoded stream with $M=1$.
Hence any one-pass algorithm that outputs $\widehat{j}$ with $L_{\mathrm{Gini}}(\widehat{j})\le\opt+\epsilon$ also satisfies $L_{\mathrm{MSE}}(\widehat{j})\le\OPT_{\mathrm{MSE}}+\epsilon/2$, giving a one-pass $(\epsilon/2)$-additive algorithm for regression with $M=1$.
By Theorem~\ref{thm:lb-regression} with $M=1$, such an algorithm requires $\Omega(M^2/(\epsilon/2))=\Omega(1/\epsilon)$ bits of memory.
\end{proof}

\end{document}